\newcommand{\Ac}{\mathcal{A}}
\newcommand{\Cc}{\mathcal{C}}
\newcommand{\Dc}{\mathcal{D}}
\newcommand{\Ec}{\mathcal{E}}
\newcommand{\Nc}{\mathcal{N}}
\newcommand{\Rc}{\mathcal{R}}
\newcommand{\Tc}{\mathcal{T}}
\newcommand{\Uc}{\mathcal{U}}
\newcommand{\Vc}{\mathcal{V}}
\newcommand{\Xc}{\mathcal{X}}
\newcommand{\Yc}{\mathcal{Y}}
\newcommand{\Xv}{{\bf X}}
\newcommand{\Yv}{{\bf Y}}
\newcommand{\Uv}{{\bf U}}
\newcommand{\xv}{{\bf x}}
\newcommand{\yv}{{\bf y}}
\newcommand{\jv}{{\bf j}}
\newcommand{\kv}{{\bf k}}
\newcommand{\Uh}{{\hat{U}}}
\newcommand{\uh}{{\hat{u}}}
\newcommand{\Ut}{{\tilde{U}}}
\newcommand{\Xt}{{\tilde{X}}}
\newcommand{\Yt}{{\tilde{Y}}}
\def\a{\alpha}
\def\b{\beta}
\def\d{\delta}
\def\e{\epsilon}
\DeclareMathOperator\E{E}
\let\P\relax
\DeclareMathOperator\P{P}
\newcommand\ie{i.e.,\xspace}
\def\textiid{i.i.d.\@\xspace}
\newcommand\iid{\ifmmode\text{ i.i.d. } \else \textiid \fi}
\newcommand{\ind}{\mathbbmss{1}}
\newtheorem{definition}{Definition}
\newtheorem{remark}{Remark}
\newtheorem{theorem}{Theorem}
\newtheorem{lemma}{Lemma}
\newtheorem{example}{Example}
\begin{document}
%\doublespacing

\title{Separation of source-network coding and channel coding in wireline networks}

\author{Shirin~Jalali,~\IEEEmembership{}
        and~Michelle Effros,~\IEEEmembership{Fellow,~IEEE}% <-this % stops a space
        \thanks{This paper was presented in part at IEEE International Symposium on Information Theory, Austin, Texas, 2010, and  Information Theory and Applications (ITA), San Diego, CA, 2011.}
\thanks{S. Jalali is with the Center for the Mathematics of Information, California Institute of Technology, Pasadena, CA 91125 USA (e-mail: shirin@caltech.edu),}
\thanks{M. Effros is with the Department of Electrical Engineering, California Institute of Technology, Pasadena, CA 91125 USA (e-mail:
effors@caltech.edu),}
}

\maketitle

\newcommand{\p}{\mathds{P}}
\newcommand{\Lc}{\mathcal{L}}
\newcommand{\mb}{\mathbf{m}}
\newcommand{\bb}{\mathbf{b}}
\newcommand{\Xb}{\mathbf{X}}
\newcommand{\Yb}{\mathbf{Y}}
\newcommand{\Ub}{\mathbf{U}}
\newcommand{\La}{\Lambda}
\newcommand{\su}{\underline{s}}
\newcommand{\xu}{\underline{x}}
\newcommand{\yu}{\underline{y}}
\newcommand{\Xu}{\underline{X}}
\newcommand{\Yu}{\underline{Y}}
\newcommand{\Uu}{\underline{U}}

\begin{abstract}
In this paper we prove the separation of source-network coding and channel coding in wireline networks.  For the purposes of this work, a wireline network is  any network of independent, memoryless, point-to-point, finite-alphabet channels used to transmit dependent sources either losslessly or subject to a distortion constraint. In deriving this result, we also prove that in a  general memoryless network with dependent sources, lossless and zero-distortion reconstruction are equivalent provided that  the conditional entropy of each source given the other  sources is non-zero. Furthermore, we extend the separation result to the case of continuous-alphabet, point-to-point channels such as additive white Gaussian noise (AWGN) channels.
\end{abstract}

%******************************************************************************
%******************************************************************************
\section{Introduction}

In his seminal work~\cite{Shannon:48},
Shannon separates the problem
of communicating a memoryless source
across a single noisy, memoryless channel
into separate lossless source coding
and channel coding problems.
The corresponding result for lossy coding in point-to-point channels is also proven in the same work. 
%The achievability part of the proof is   almost immediate since lossy coding in a point-to-point channel is equivalent to lossless coding of the codeword indices.  
For a single point-to-point channel,
separation holds under a wide variety of source and channel distributions
(see, for example,~\cite{VembuV:95} and the references therein).
Unfortunately, separation does not necessarily hold in network systems.
Even in very small networks like the multiple access channel~\cite{CoverE:80}, 
separation can fail when statistical dependencies
between the sources at different network locations
are useful for increasing the rate across the channel.
Since source codes tend to destroy such dependencies,
joint source-channel codes can achieve better performance
than separate source and channel codes in these scenarios.

This paper proves the separation between source-network coding
and channel coding in networks of independent noisy, discrete, memoryless channels (DMC); these networks are called {\em wireline networks} in this work.
Roughly, we show that the vector of achievable distortions
in delivering a family of dependent sources across such a network $\cal N$
equals the vector of achievable distortions
for delivering the same sources across a distinct network $\hat{\cal N}$.
Network $\hat{\cal N}$ is built
by replacing each channel $p(y|x)$ in $\cal N$
by a noiseless, point-to-point bit-pipe
of the corresponding capacity $C=\max_{p(x)}I(X;Y)$.
Thus a code that applies source-network coding
across links that are made almost lossless through the application
of independent channel coding across each link
asymptotically achieves the optimal performance
across the network as a whole.

Note that the operations of network source coding and network coding
are not separable, as shown in \cite{EffrosM:03}  and \cite{RamamoorthyJ:06}
for lossless source coding in non-multicast and multicast networks, respectively.
As a result, a joint network-source code is required, and only the channel code can be separated.
While the achievability of a separated strategy is straightforward,
the converse is more difficult since
preserving statistical dependence between
codewords transmitted across distinct edges of a network of noisy links
improves the end-to-end network performance
in some networks~\cite{KoetterE:09a,KoetterE_arxiv,KoetterE:11}.

The results derived here are consistent with those of \cite{Borade:02,SongY:06,KoetterE:11},
which prove the separation between network coding and channel coding
for multicast \cite{Borade:02,SongY:06} and general demands \cite{KoetterE:09a,KoetterE:11}, respectively,
under the assumption that messages transmitted to different subsets of users are independent.
The shift here is from independent sources  to dependent sources and from reliable information delivery  
 to both lossy and lossless data descriptions.

After hearing about our work, the author of~\cite{Yeung:10} pointed us to his unpublished work from the 90s, which proves the separation of lossy network source coding and channel coding in three specific network structures, namely, the Slepian-Wolf configuration, the multiple description configuration, and Yamamoto's cascade network. In these cases, \cite{Yeung:10} proves  separation without requiring the single-letter characterizations of the distortion regions. Our result  generalizes this result to any network configuration that consists of point-to-point noisy channels. The strategy underlying our proof follows that of \cite{KoetterE:09a,KoetterE:11}, but the details differ significantly, both due to the inclusion of dependent sources and lossy reconstruction and in the focus on discrete-alphabet channels. 

The organization of this paper is as follows.  Sections~\ref{sec:not} and~\ref{sec:setup} describe the notation and problem set-up, respectively.
Section~\ref{sec:stacked_net} describes a tool from \cite{KoetterE:11} called a stacked network that allows us to employ, in later arguments, typicality across copies of a network rather than typicality across time. Section~\ref{sec:results} proves the separation of lossy source-network coding and channel coding. Section \ref{sec:zero-dist} proves the equivalence of zero-distortion and lossless reconstruction in general memoryless channels. Section \ref{sec:awgn} shows that the separation of source-network coding and channel coding continues to hold  for well-behaved continuous channels such as AWGN channels under input power constraints.  Section \ref{sec:conclusion} concludes the paper.

The first part of the results presented in this paper, showing the separation of lossy source-network coding and channel coding in a wireline network  was first presented at ISIT 2010 \cite{JalaliE:10}. A similar result by other authors was presented at the same ISIT \cite{TianC:10}, where they prove that, in the same setup and under the finite source and channel alphabet assumption,  if each source is required only at one other node, or at multiple other nodes but at the same distortion level, then separation of source-network coding and channel coding is optimal. For the general case, under a restricted set of distortion measures,  they prove approximate optimality of separation strategy.

%******************************************************************************
%******************************************************************************
\section{Notation and definitions}\label{sec:not}

Finite sets are denoted by script letters such as $\Xc$ and $\Yc$. The size of a finite set $\Ac$ is denoted by $|\Ac|$. Random variables are denoted by upper case letters such as $X$ and $Y$. Bold face letters represent vectors. The alphabet of a random variable $X$ is denoted by $\Xc$. Random vectors are represented by upper case bold letters like  $\Xb$ and $\Yb$. The length of a vector is implied in the context. The $\ell^{\rm th}$ element of a vector $\Xv$ is denoted by $X_{\ell}$. A vector $\xv=(x_1,\ldots,x_n)$  or $\Xb=(X_1,\ldots,X_n)$ is sometimes represented as $x^n$ or $X^n$. For $1\leq i\leq j\leq n$, $x_{i}^j=(x_i,x_{i+1},\ldots,x_j)$.  For a set $\Ac\subseteq\{1,2,\ldots,n\}$, $\xv_{\Ac}=(x_i)_{i\in\Ac}$, where the elements are sorted in ascending order of their indices.

For two vectors $\xv,\yv\in\mathds{R}^r$, $\xv\leq \yv$ iff $x_i\leq y_i$ for all $1\leq i\leq r$. The $\ell_1$ distance between  two vectors $\xv$ and $\yv$ of the same length $r$ is denoted by $\|\xv-\yv\|_1 = \sum_{i=1}^r|x_{i}-y_{i}|$.
If $\xv$ and $\yv$ represent pmfs, i.e., $\sum_{i=1}^rx_{i}=\sum_{i=1}^ry_{i}=1$ and $x_{i}, y_{i}\geq 0$ for all $i\in\{1,\ldots,r\}$, then the total variation distance between $\xv$ and $\yv$ is defined as $ \|\xv-\yv\|_{\rm TV}=0.5\|\xv-\yv\|_1.$ 
\begin{definition}
The empirical distribution of a sequence $x^n\in\Xc^n$ is defined as
\begin{align*}
\pi(x|x^n)\triangleq {|\{i: x_i=x\}|\over n},
\end{align*}
for all $x\in\Xc$. Similarly,  the joint empirical distribution of  a sequence $(x^n,y^n)\in\Xc^n\times\Yc^n$ is defined as
\begin{align*}
\pi(x,y|x^n,y^n)\triangleq {|\{i: (x_i,y_i)=(x,y)\}|\over n},
\end{align*}
for all $(x,y)\in\Xc\times\Yc$.
\end{definition}
\begin{definition}
For a random variable $X\sim p(x)$ and a constant $\e>0$, the set $\Tc_{\e}^{(n)}(X)$ of $\e$-typical sequences\footnote{In this paper we only consider strong typicality, and use the definition introduced in \cite{OrlitskyR:01}.} of length $n$ is defined as 
\begin{align*}
\Tc_{\e}^{(n)}(X) \triangleq \{x^n: |\pi(x|x^n)-p(x)|\leq \e p(x) \;{\rm for }\;{\rm all}\; x\in\Xc \}.
\end{align*}
For $(X,Y)\sim p(x,y)$, the set $\Tc_{\e}^{(n)}(X,Y)$ of jointly $\e$-typical sequences is defined as 
\begin{align*}
\Tc_{\e}^{(n)}(X,Y) \triangleq \{(x^n,y^n):\;& |\pi(x,y|x^n,y^n)-p(x,y)|\leq \e p(x,y),\;{\rm for  }\;{\rm all}\; (x,y)\in\Xc\times\Yc \}.
\end{align*}
We shall use $\Tc_{\e}^{(n)}$ instead of  $\Tc_{\e}^{(n)}(X)$ or $\Tc_{\e}^{(n)}(X,Y)$ when the random variable(s) are clear from the context. 
\end{definition}
For $x^n\in\Tc_{\e}^{(n)}$, let 
\begin{align*}
\Tc_{\e}^{(n)}(Y|x^n)\triangleq\{y^n:(x^n,y^n)\in\Tc_{\e}^{(n)}\}.
\end{align*}

%******************************************************************************
%******************************************************************************

\section{The problem setup}\label{sec:setup}

Consider a multiterminal network $\Nc$ consisting of $m$ nodes  interconnected via a collection of point-to-point, independent DMCs. The network structure is represented by a directed graph $G$ with node set $\Vc$ and edge set $\Ec$. Each directed edge $e=(a,b)\in\Ec$ represents an independent  point-to-point DMC $(\Xc_e,p(y_e|x_e),\Yc_e)$ between nodes $a$ (input) and $b$ (output). For the channel represented by the edge $e$, the transition probabilities are $\{p(y_e|x_e)\}_{(x_e,y_e)\in\Xc_e\times\Yc_e}$. The channels are independent by assumption, together giving a multiterminal channel $(\prod_{e\in\Ec}\Xc_e,\prod_{e\in\Ec}p(y_e|x_e),\prod_{e\in\Ec}\Yc_e)$. The channel input at each node $a\in\Vc$ is $x^{(a)}=(x_{(a,v)}: (a,v)\in\Ec)$. The channel output at node $a$ is $y^{(a)}=(y_{(v,a)}:(v,a)\in\Ec)$.

 \begin{figure}
\begin{center}
\psfrag{p}{\scriptsize $p(y|x)$}
\psfrag{a}{\footnotesize $a$}
\psfrag{y1}{\scriptsize $Y_1^{t-1}$}
\psfrag{y2}{\scriptsize $Y_2^{t-1}$}
\psfrag{x1}{\scriptsize $X_{1,t}$}
\psfrag{x2}{\scriptsize $X_{2,t}$}
\psfrag{x3}{\scriptsize $X_{3,t}$}
\psfrag{Ua}{\scriptsize $U^{(a),L}$}
\mbox{\subfigure[Graph representation of a wireline network ]{\includegraphics[width=3.5cm]{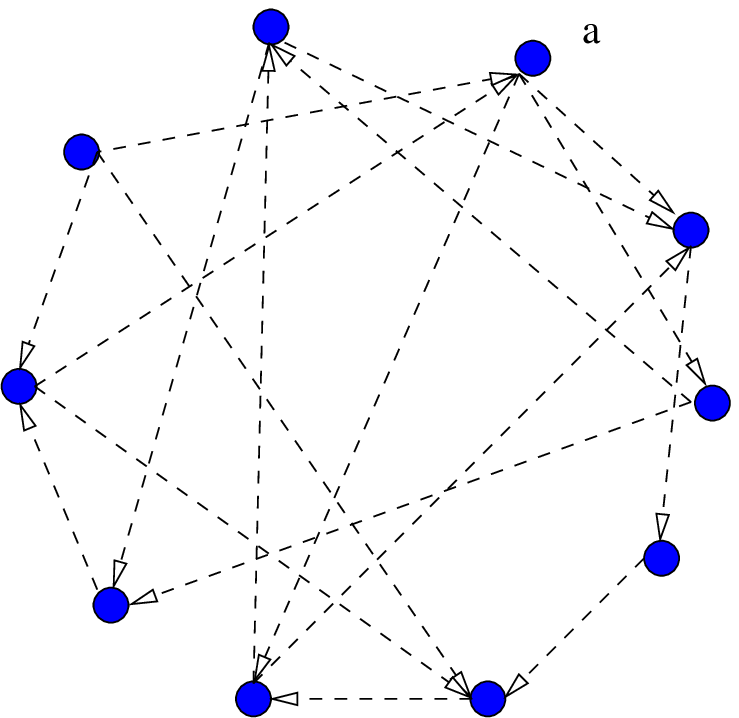}}\hspace{.5cm}
\subfigure[Each arrow represents a DMC.]{\includegraphics[width=5.5cm]{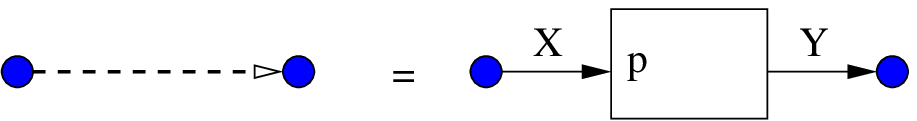}}\hspace{.5cm}
\subfigure[Coding operation at node $a$ at time $t$]{\includegraphics[width=4.8cm]{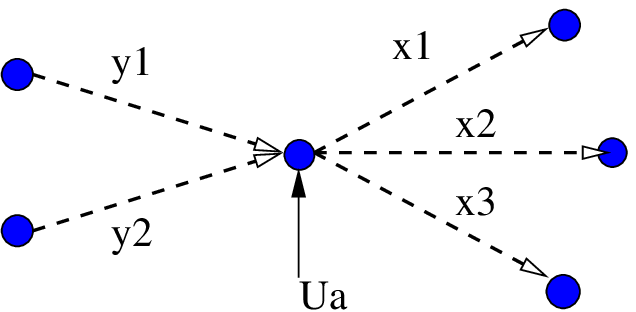}}}\caption{Noisy wired network model}\label{fig:wired-noisy}
\end{center}
\end{figure}

Each node $a$  observes some source process $\mathbf{U}^{(a)}=\{U^{(a)}_k\}_{k=1}^{\infty}$ and is interested in reconstructing the processes observed by  a subset of the other nodes. The alphabet  $\Uc^{(a)}$ of source $\mathbf{U}^{(a)}$  can be either scalar- or vector-valued. A vector-valued source  $\mathbf{U}^{(a)}$ denotes a collection of sources available at node $a$. In a block coding framework, source output symbols are divided into non-overlapping blocks of length $L$. Each block  is described separately. At the beginning of the $j^{\rm th}$ coding period, each node $a$  observes a length-$L$ block of the process $\mathbf{U}^{(a)}$, i.e., $U^{(a),j L}_{(j-1)L+1}=(U^{(a)}_{(j-1)L+1},\ldots,U^{(a)}_{j L})$. The blocks $\{U^{(a),j L}_{(j-1)L+1}\}_{a\in\Vc}$ observed at the nodes $a\in\Vc$ are described over $n$ uses of the network. The rate \[\kappa\triangleq\frac{L}{n}\] is a parameter of the code. At each time $t\in\{1,\ldots,n\}$, each node $a$  generates its next channel inputs as a function of  its source observation $U^{(a),j L}_{(j-1)L+1}$ and its observed channel outputs $Y^{(a),t-1}=(Y^{(a)}_1,\ldots,Y^{(a)}_{t-1})$ up to time $t-1$ using encoder 
\begin{align}
X_t^{(a)}:(\Yc^{(a)})^{t-1}\times\Uc^{(a),L}\to\Xc^{(a)}.\label{eq:Xt}
\end{align}
Note that each node might have more than one incoming channel and more than one outgoing channel. Thus,  $X_t^{(a)}$ and $Y_t^{(a)}$ are vectors with dimensions equal to the outdegree and indegree of node $a$, respectively. The reconstruction at node $b$ of the source vector observed at node $a$ is denoted by $\hat{U}^{(a\to b),L}$. This reconstruction is  determined using a decoder with inputs equal to the source and channel outputs observed at node $b$. Thus, $\hat{U}^{(a\to b),L}= \hat{U}^{(a\to b)}(Y^{(b),n},U^{(b),L})$, where
\begin{align}
\hat{U}^{(a\to b)}:\Yc^{(b),n}\times\Uc^{(a),L}\to\hat{\Uc}^{(a\to b),L}.\label{eq:Uhat_t}
\end{align}

The performance of   a  given code is the vector of expected average distortions between the sources $\{\Uv^{(a)}\}_{a\in\Vc}$ and reconstructions $\{{\bf \Uh}^{(a\to b)}\}_{a,b\in\Vc}$. For each $a,b\in\Vc$,
\begin{align*}
\E[d^{(a\to b)}_L(U^{(a),L},\hat{U}^{(a\to b),L})] \triangleq \E\left[\frac{1}{L}\sum\limits_{k=1}^Ld^{(a\to b)}(U^{(a)}_k,\hat{U}^{(a\to b)}_k)\right],
\end{align*}
where $d^{(a\to b)}:\Uc^{(a)}\times\hat{\Uc}^{(a\to b)}\rightarrow {\mathds{R}}^+$ is a per-letter distortion measure. As mentioned before $\Uc^{(a)}$ and $\hat{\Uc}^{(a\to b)}$ may be either scalar or vector-valued. This allows the case where node $a$ observes multiple sources and node $b$ is interested in reconstructing a subset of them.  Let 
\[
d_{\max}\triangleq\max\limits_{\substack{(a,b)\in \Vc^2\\ \a \in \Uc^{(a)},\b\in\hat{\Uc}^{(a\to b)}}} d^{(a\to b)}(\a,\b)<\infty.
\]
%If node $b$ is not interested in reconstructing node $a$, then  $d^{(a\to b)}\equiv 0$. 

The  $|\Vc|\times|\Vc|$ distortion matrix $D$ is said to be achievable at rate $\kappa$, if for any $\e>0$, and  any  $L$ large enough, there exists a blocklength-$(L,n)$ coding scheme such that
\[
 {L\over n} \geq \kappa-\e,
\] 
and 
\begin{align}
\E[d^{(a \to b)}_L(U^{(a),L},\hat{U}^{(a\to b),L})] \leq D(a,b) + \e,\label{eq:dist}
\end{align}
for every $(a,b)\in \Vc^2$. Let $\Dc(\kappa,\Nc)$ denote the set of achievable distortion matrices at  rate $\kappa$ in network $\Nc$.

\begin{remark}
While here we are assuming that all sources have a fixed rate $\kappa$, in general, the rate $\kappa$ can vary for different sources. Our results  continue to hold in that case as well. However, for notational simplicity, we assume that $\kappa$ is fixed among all sources. 
\end{remark}

Throughout the paper, for any network $\Nc$ of noisy point-to-point  channels described by  directed graph $G$,  let the network $\Nc_b$ denote a network of noiseless point-to-point channels described by the same directed graph $G$. Precisely,  network $\Nc_b$ replaces each noisy DMC $(\Xc_e,p(y_e|x_e),\Yc_e)$, $e\in\Ec$,  by a noiseless bit pipe of the same finite capacity $C_e=\max_{p(x_e)}I(X_e;Y_e)$. A bit pipe of capacity $C_e$ is an error-free, point-to-point communication channel that delivers, in $n$ channel uses, $\lfloor nC_e\rfloor$ bits  from the transmitter to the receiver, for any $n\geq 1$. The timing of the delivery of these bits has no impact on the set of achievable distortion matrices. This result is shown for the network  capacity problem in \cite{KoetterE:09a}; the same argument goes through immediately for the case of lossy reconstruction.

\begin{example}\label{exp:1}
Fig.~\ref{fig:exp1} demonstrates  a simple example of the kind of networks we study in this paper. The graph of the network, shown in  Fig.~\ref{fig2-1}, consists of two edges and three nodes. Each edge  models a  point-to-point  DMC. Fig.~\ref{fig2-2} shows a specific realization of such a network with sources available at nodes 1 and 2. It also shows how the encoding and  decoding operations are performed on network of Fig.~\ref{fig2-1}. The decoder reconstructs both sources $U_1$ and $U_2$.  For $i\in\{1,2\}$, let $C_i$ denote the noisy capacity of  channel $i$. Fig.~\ref{fig2-3} shows the equivalent noiseless model.  At coding rate $\kappa$,  choosing  $n=\lfloor \kappa^{-1} L\rfloor$,  $W_i\in\{1,\ldots,2^{\lfloor nC_i\rfloor}\}$. For this special example, \cite{Yeung:10} proves that at $\kappa=1$  separation is optimal and  the set of achievable distortions on both networks are equal. In this paper, we extend this result to general networks of point-to-point noisy channels, at arbitrary coding rate $\kappa$.
\end{example}

 \begin{figure}
\begin{center}
\psfrag{p1}{\footnotesize $p(y_1|x_1)$}
\psfrag{p2}{\footnotesize $p(y_2|x_2)$}
\psfrag{Y1}{\footnotesize $Y_1^n$}
\psfrag{Y2}{\footnotesize $Y_2^n$}
\psfrag{X1}{\footnotesize $X_1^n$}
\psfrag{X2}{\footnotesize $X_2^n$}
\psfrag{W1}{\footnotesize $W_1$}
\psfrag{W2}{\footnotesize $W_2$}
\psfrag{U1}{\footnotesize $U_1^L$}
\psfrag{U2}{\footnotesize $U_2^L$}
\psfrag{Uh}{\footnotesize $(\Uh_1^L,\Uh_2^L)$}
\psfrag{E1}{\footnotesize Encoder 1}
\psfrag{E2}{\footnotesize Encoder 2}
\psfrag{D}{\footnotesize Decoder}
\subfigure[Network model graph]{\includegraphics[width=2.5cm]{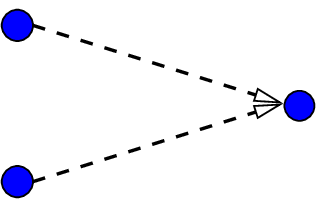}\label{fig2-1}}\\ \vspace{0.2cm}
\subfigure[Noisy  model]{\includegraphics[width=10cm]{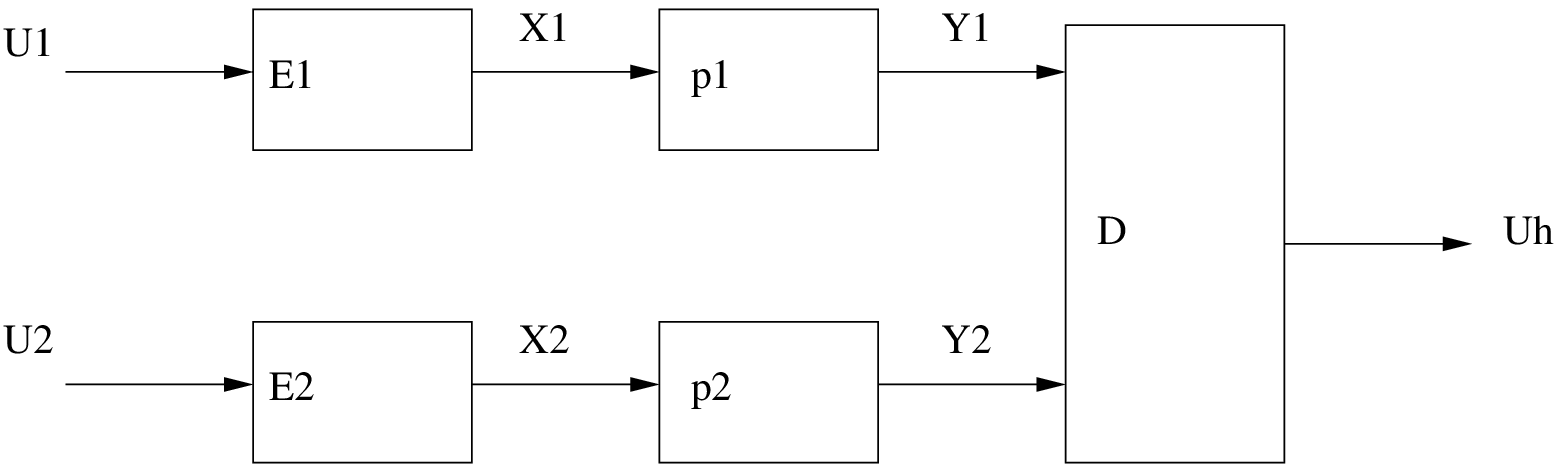}\label{fig2-2}}\\ \vspace{0.2cm}
\subfigure[Noiseless equivalent model]{\includegraphics[width=7.5cm]{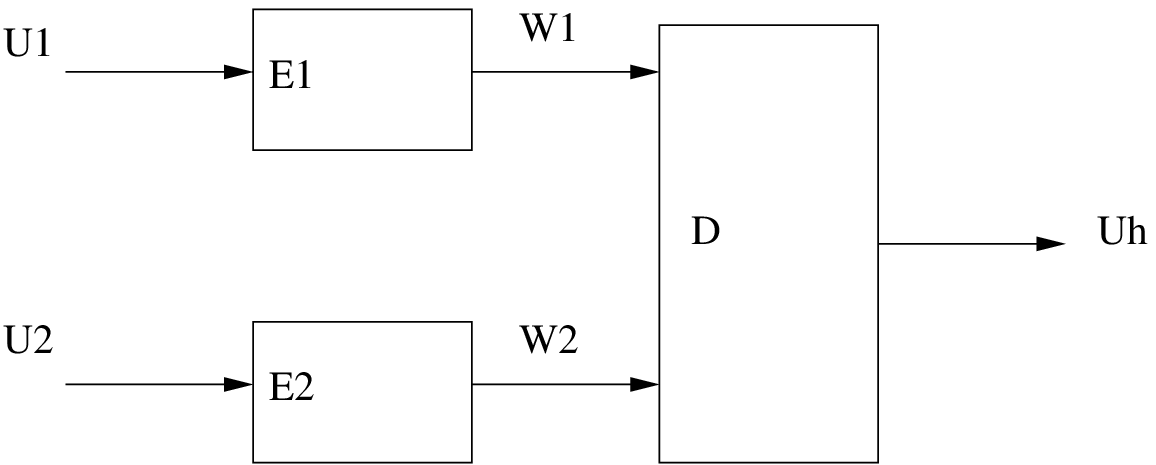}\label{fig2-3}}
\caption{Example \ref{exp:1}}\label{fig:exp1}
\end{center}
\end{figure}

%******************************************************************************
%******************************************************************************

\section{Stacked network} \label{sec:stacked_net}

The stacked network is a tool introduced in \cite{KoetterE:09a} for proving separation results. The key underlying observation is that by taking multiple copies of the same network and applying the same code to that network in each copy, we create i.i.d.~ copies of the input and output of a given channel at each time $t$. This allows us to later employ typicality arguments to our channel inputs and outputs across copies of the network and not across time. Applying typicality arguments across time is problematic since the inputs to the channel at different times $t$ need not be i.i.d.

For a given network $\Nc$, the corresponding $N$-fold stacked network $\underline{\Nc}$ is defined as $N$ copies of the original network \cite{KoetterE:09a}. That is, for each node $a\in\Vc$ and each edge $e\in\Ec$ in $\Nc$, there are $N$ copies of  node $a$ and $N$ copies of edge $e$ in  $\underline{\Nc}$. At each time instance, each node has access to the data available at all copies of node $a$, and each may use this extra information in generating the channel inputs for future time instances. Likewise, in decoding, all $N$ copies of a node can collaborate in reconstructing the source vectors. This is made more precise in the following two definitions. The encoder $\Xb_t^{(a)}$ for node $a$ at time $t$ in $N$-fold stacked network $\underline{\Nc}$ is a mapping 
\begin{align}
\Xb_t^{(a)}: \Yc^{(a),N(t-1)}\times\Uc^{(a),NL}\to \Xc^{(a),N},\label{eq:Xt_ul}
\end{align}
and the node-$b$ decoder $\hat{U}^{(a\to b),NL}$ for signal $U^{(a),NL}$ of node $a$ is a mapping
\begin{align}
\hat{U}^{(a\to b),NL}: {\Yc}^{(b),nN}\times\Uc^{(b),NL}\to\hat{\Uc}^{(a\to b),NL}.\label{eq:Uhat_t_ul}
\end{align}
These definitions  correspond to \eqref{eq:Xt} and \eqref{eq:Uhat_t} in network $\Nc$. In \eqref{eq:Xt_ul},  network input $\Xb_t^{(a)}$ is a vector with $N$ components denoted by  $\Xb_t^{(a)}=(\Xb_t^{(a)}(1),\ldots,\Xb_t^{(a)}(N))$.

In the $N$-fold stacked network, the distortion between the source originating  at node $a$ and its reconstruction at node $b$ is defined as
\begin{align*}
D_N(a,b)=\E \left[d^{(a\to b)}_{NL}(U^{(a),NL},\hat{U}^{(a\to b),NL})\right],
\end{align*}
for any $(a,b)\in\Vc\times\Vc$.

A distortion matrix $D$ is said to be achievable at rate $\kappa$ in the stacked version of network $\Nc$, if for any given $\e>0$, there exist $n$, $L$, and $N$ such that distortion $D$ and rate $\kappa$ are achievable in the $N$-fold stacked network; that is, $L/n\geq \kappa-\e$ and $D_N(a,b) \leq D(a,b)+\e$ for all $a,b\in\Vc$ on $N$-fold stacked network $\underline{\Nc}$.  Let $\Dc_{s}(\kappa,\underline{\Nc})$ denote the set of achievable distortion matrices at  rate $\kappa$ in the stacked network $\underline{\Nc}$. Note that the depth $N$ of the stacked network $\underline{\Nc}$ on which each distortion matrix  $D\in\Dc_s(\kappa,\Nc)$ is achievable may vary with $D$.

Note that the dimension of the distortion matrices in both single layer and multi-layer networks is $m\times m$. The following theorem establishes the relationship between the two sets.

\begin{theorem} \label{thm:stack}
At any rate $\kappa$, 
\begin{align}
\Dc(\kappa,\Nc)=\Dc_{s}(\kappa,\underline{\Nc}).
\end{align}
\end{theorem}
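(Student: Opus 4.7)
The proof splits into the two containments, one essentially trivial and one requiring a careful interleaving argument in time.

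For the forward direction $\Dc(\kappa,\Nc)\subseteq\Dc_{s}(\kappa,\underline{\Nc})$, my plan is to observe that any single-layer $(L,n)$-code on $\Nc$ achieving distortion $D(a,b)+\e$ can be lifted, for every $N\geq 1$, to a stacked code on $\underline{\Nc}$ simply by running $N$ independent parallel copies of the code, one per layer, and ignoring the ability of the stacked encoder \eqref{eq:Xt_ul} to mix information across layers. Since the $N$ layers of $\underline{\Nc}$ are statistically independent copies of $\Nc$, each layer attains the same expected distortion as the original code, and therefore so does the average distortion $D_N(a,b)$.

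The substantive direction is the converse $\Dc_{s}(\kappa,\underline{\Nc})\subseteq\Dc(\kappa,\Nc)$. Given any stacked $(L',n',N)$-code on $\underline{\Nc}$ with $L'/n'\geq \kappa-\e$ and $D_N(a,b)\leq D(a,b)+\e$, I will simulate it in the single-layer network $\Nc$ using a longer single-layer block of $L:=NL'$ source symbols transmitted over $n:=Nn'$ channel uses, which preserves the rate: $L/n=L'/n'\geq\kappa-\e$. At each node $a$, the $NL'$ source symbols are partitioned into $N$ sub-blocks of length $L'$, one per simulated layer. Index the single-layer channel uses lexicographically as $s=(t-1)N+k$ with $t\in\{1,\ldots,n'\}$ and $k\in\{1,\ldots,N\}$, and at time $s$ let node $a$ evaluate the stacked encoder $\Xb_t^{(a)}$ and transmit its $k$-th coordinate. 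Node $b$ collects the $n$ outputs, regroups them into $N$ blocks of $n'$ symbols, and applies the stacked decoder \eqref{eq:Uhat_t_ul}.

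Two verifications finish the converse. First, causality: $\Xb_t^{(a)}$ depends only on the full source block and the past stacked outputs $Y^{(a),N(t-1)}$, and by single-layer time $s-1=(t-1)N+k-1\geq (t-1)N$ the node has already observed exactly the $N(t-1)$ scalar outputs needed (outputs from coordinates $1,\ldots,k-1$ of the current super-time-step are available but unused). Second, because each DMC is memoryless, the conditional joint distribution of $N$ consecutive single-layer outputs of edge $e$ given their inputs coincides with that of the $N$ independent copies of $e$ in $\underline{\Nc}$, so the joint law of sources, channel variables, and reconstructions under the simulation matches that of the stacked code exactly.

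The distortion equality is then immediate from the per-letter form of $d^{(a\to b)}_L$: partitioning the $L=NL'$ reconstructed symbols into $N$ groups of length $L'$ yields
\begin{align*}
d^{(a\to b)}_L(U^{(a),L},\hat{U}^{(a\to b),L}) = \frac{1}{N}\sum_{i=1}^N d^{(a\to b)}_{L'}\!\left(U^{(a),L'}(i),\hat{U}^{(a\to b),L'}(i)\right),
\end{align*}
and taking expectations gives the simulated single-layer distortion equal to $D_N(a,b)\leq D(a,b)+\e$, as required. I expect the only real obstacle to be purely notational, namely the bookkeeping of the interleaved time indices and block-length ratios; no probabilistic approximation or typicality argument is needed here since the construction reproduces the stacked joint law verbatim.
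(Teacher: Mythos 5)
Your proposal is correct and follows essentially the same route as the paper: the converse is exactly the paper's Appendix-A argument of unraveling the stacked code across time in super-blocks of $N$ physical channel uses (with the same causality and memorylessness checks), and the forward direction is the same triviality, handled in the paper by simply taking $N=1$ rather than by running $N$ parallel copies.
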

\begin{proof}[Proof of Theorem \ref{thm:stack}]

\begin{itemize}

\item[i.] $\Dc(\kappa,\Nc) \subseteq \Dc_s(\kappa,\underline{\Nc})$: This  is obvious, because the stacked network is a generalization of the original network. In fact,  choosing $N=1$, any distortion matrix $D$ that is achievable on $\Nc$ is also achievable on its stacked version too. 
%Consider any $D\in {\rm int}(\Dc(\kappa,\Nc))$. Then for any $\e>0$, there exists a code of rate $\kappa\leq L/n$ on $\Nc$ such that \eqref{eq:dist} is satisfied. For any $N$, a stacked network that uses this same coding strategy independently in each layer achieves expected distortion 
%    \begin{align*}
%    D_N(a,b)&=\E [ d^{(a\to b)}_{NL}(U^{(a),NL},\hat{U}^{(a\to b),NL})] \nonumber\\
%    &=\frac{1}{N} \sum\limits_{\ell=1}^N \E [d^{(a\to b)}_{L}(U^{(a), \ell L}_{(\ell-1)L+1},\hat{U}^{(a\to b),\ell L}_{(\ell-1)L+1})]\nonumber\\
%    &\leq\frac{1}{N}\sum\limits_{\ell=1}^N [D(a,b)+\e]\nonumber\\
%    &=D(a,b)+\e.
%    \end{align*}
%

\item[ii.] $\Dc_s(\kappa,\underline{\Nc}) \subseteq \Dc(\kappa,\Nc)$: The proof is very similar to the proof of the analogous part of Lemma 1 in \cite{KoetterE:11}, but,  for completeness,  we present the proof  in Appendix A.

\end{itemize}
\end{proof}

%******************************************************************************
%******************************************************************************

\section{Replacing a noisy channel with a bit pipe} \label{sec:results}

We assume that the sources are   independent and identically distributed (i.i.d.)  according to some distribution $p(u^{(1)},u^{(2)},\ldots,u^{(m)})$. That is, for any $k\geq1$, 
\begin{align*}
\P&\left((U^{(1),k},\ldots,U^{(m),k})=(u^{(1),k},\ldots,u^{(m),k})\right)\nonumber\\
&=\prod\limits_{i=1}^k p(u^{(1)}_i,\ldots,u^{(m)}_i).
\end{align*} 

For the given i.i.d. source assumption, Theorem \ref{thm:main} proves that the space of achievable distortions for networks $\Nc$ and $\Nc_b$ are identical. The proof follows the proof strategy of [6, Theorem 3], showing that any code for network $\underline{\Nc}_b$ can be applied across network $\Nc$ with the aid of a channel code and any code for $\Nc$ can be applied across network $\underline{\Nc}_b$ with the aid of an ``emulation code''. Just as a channel code enables us to emulate a noiseless bit pipe across a noisy channel, an emulation code enables us to emulate a noisy channel across a noiseless bit pipe. The result proves the optimality of separate source-network codes and channel codes on networks of point-to-point DMCs. Notice, however, that separate codes are here applied in the manner described in the proof of Theorem \ref{thm:stack} rather than the more conventional direct application across time.
  
\begin{theorem}\label{thm:main}
For a network $\Nc$ of independent point-to-point DMCs with memoryless sources,
\begin{align}
\Dc(\kappa,\Nc)=\Dc(\kappa,\Nc_b),
\end{align}
for any $\kappa>0$.
\end{theorem}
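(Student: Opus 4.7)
The plan is to prove the two containments $\Dc(\kappa,\Nc_b)\subseteq\Dc(\kappa,\Nc)$ and $\Dc(\kappa,\Nc)\subseteq\Dc(\kappa,\Nc_b)$ separately by constructing \emph{emulation codes} in each direction. In both cases, Theorem \ref{thm:stack} lets me move to the stacked networks $\underline{\Nc}$ and $\underline{\Nc}_b$; the gain is that if I run the same deterministic single-layer code independently in each of the $N$ layers, the $N$ cross-layer channel inputs on every edge $e$ at every time step $t$ are i.i.d.\ copies of a single random variable. This turns joint-typicality arguments across layers (rather than across the $n$ time indices, where inputs need not be stationary) into the workhorse of the construction, in parallel with \cite{KoetterE:11}.

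For the easy direction $\Dc(\kappa,\Nc_b)\subseteq\Dc(\kappa,\Nc)$, I start from a rate-$\kappa$ source-network code achieving $D$ on $\Nc_b$ and run $N$ independent parallel copies of it on $\underline{\Nc}$, replacing each virtual bit pipe on edge $e$ by a capacity-achieving $(N,\lfloor N(C_e-\delta)\rfloor)$ channel code applied across the $N$ cross-layer uses of the underlying DMC. Shannon's channel coding theorem supplies an error probability $\epsilon_N\to 0$ per edge per time step, and a union bound over $\Ec$ and $\{1,\dots,n\}$ keeps the cumulative emulation-failure probability at most $n|\Ec|\epsilon_N$. On the no-failure event the layer-averaged reconstructions agree with those produced by the original code on $\Nc_b$, so the expected per-layer distortion exceeds $D$ by at most $d_{\max}\cdot n|\Ec|\epsilon_N\to 0$.

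The converse-style direction $\Dc(\kappa,\Nc)\subseteq\Dc(\kappa,\Nc_b)$ is the heart of the proof: given a rate-$\kappa$ code on $\Nc$ achieving $D$, I run $N$ independent copies of it on $\underline{\Nc}_b$ and emulate each noisy DMC on top of the $N$ parallel bit pipes of the corresponding edge. At time $t$ and edge $e$, the transmitter sees a cross-layer input vector $X^N$ that is i.i.d.\ by the inductive hypothesis on prior times, and must produce at the receiver a $Y^N$ whose joint distribution with $X^N$ is within total variation $\delta_N\to 0$ of $N$ i.i.d.\ draws from $p(y_e|x_e)$. The mechanism is a reverse-Shannon-type scheme: both endpoints share (via a vanishing-rate initial seed) a random codebook of $\approx 2^{N I_{q_t}(X;Y)}$ output sequences drawn i.i.d.\ from the marginal induced by the empirical input type $q_t$ of $X^N$; by soft covering / joint typicality, the sender identifies a codeword jointly typical with $X^N$ and transmits its index using $\lceil N I_{q_t}(X;Y)\rceil\le NC_e$ bits, which fits the $NC_e$-bit budget on edge $e$ because $I_{q_t}(X;Y)\le\max_p I_p(X;Y)=C_e$. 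The receiver outputs that codeword as the emulated $Y^N$. Since each layer's encoder is the same deterministic function applied to i.i.d.\ histories, the cross-layer i.i.d.\ property propagates to time $t+1$, and summing the $|\Ec|n$ total-variation slacks keeps the induced joint distribution on $\underline{\Nc}_b$ within $o_N(1)$ of the true joint distribution on $N$ independent copies of $\Nc$; hence the layer-averaged distortion matches $D$ within $\epsilon$.

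The main obstacle is Direction 2, which must simulate a stochastic channel from a deterministic pipe while consuming only $C_e$ bits per use, even though $H(Y|X)$ typically exceeds $C_e$. The reverse-Shannon codebook solves this by bringing the per-use cost down to $I(X;Y)$, but two subtleties remain: handling the arbitrary and time-varying input type $q_t$ (resolved by universal or type-enumerated codebooks, using that $C_e$ dominates $I_{q_t}(X;Y)$ uniformly in $q_t$), and controlling the accumulated total-variation error across all $|\Ec|n$ emulations, which dictates the order of limits --- fix $L$ and $n$ first, then let $N\to\infty$ so that $n|\Ec|\delta_N=o(1)$.
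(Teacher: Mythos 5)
Your overall architecture coincides with the paper's: reduce to stacked networks via Theorem~\ref{thm:stack}, replace bit pipes by DMCs with cross-layer channel codes in the easy direction, and emulate each DMC over its bit pipe with a covering code of rate $I(X_e;Y_e)\le C_e$ in the hard direction, inducting over the $n$ time steps. There is, however, a genuine gap in the guarantee you attribute to the emulation code in Direction~2. You claim the receiver obtains a $Y^N$ whose \emph{joint distribution} with $X^N$ is within total variation $\delta_N\to 0$ of $N$ i.i.d.\ draws from $p(y_e|x_e)$, and your final step compares the full joint distributions induced on the two networks. A joint-typicality covering code at rate $\approx I(X;Y)$ cannot deliver this: its output is confined to $2^{NI(X;Y)}$ codewords while the target output distribution is spread over roughly $2^{NH(Y)}$ typical sequences, so the total variation between the induced $N$-fold joint distribution and the i.i.d.\ target stays near $1$. (This is precisely why exact channel synthesis at communication rate $I(X;Y)$ requires common randomness of rate about $H(Y|X)$, not a vanishing-rate seed.) What the covering code does deliver---and all that \cite{CuffP:09} is invoked for in the paper---is that the \emph{empirical} cross-layer distribution $\pi(x_e,y_e|\Xb_{e,t},\Yb_{e,t})$ converges to $p(x_e,y_e)$.

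To make that weaker guarantee suffice you need the step you omitted: because the single-layer code is applied independently and identically in each layer and the sources are i.i.d.\ across layers, the expected stacked distortion equals a bounded linear functional of the expected empirical distribution alone, namely $\sum_{x,y}\E[d_L(U^{L},\Uh^{L})\mid (X_{e},Y_{e})=(x,y)]\,\E[\pi(x,y|\Xb_{e},\Yb_{e})]$ (the paper's reduction culminating in \eqref{eq:layer_t1} and \eqref{eq:layer_tN}). Only through this identity does convergence of the empirical distribution translate into convergence of the distortion; without it your concluding comparison does not go through. Relatedly, your inductive hypothesis that the cross-layer inputs at time $t$ are i.i.d.\ is too strong---after one round of emulation they are only approximately of the correct type---so the induction must also be phrased in terms of empirical distributions, with an almost-sure-convergence plus dominated-convergence argument to control the product over $t$, as the paper does. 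A minor further issue: in Direction~1 a cross-layer channel code at rate $C_e-\delta$ carries fewer bits than the bit pipe it replaces, so you need extra layers on the noisy side (the paper's $M=\lceil NC_e/R_e\rceil$ device) to avoid a bit deficit rather than merely a small rate loss.
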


\begin{proof}[Proof of Theorem \ref{thm:main}]
By Theorem \ref{thm:stack}, the achievable region of a network $\Nc$ is equal to the achievable region of its stacked network $\underline{\Nc}$. Hence, $\Dc(\kappa,\Nc)=\Dc_s(\kappa,\underline{\Nc})$ and $\Dc(\kappa,\Nc_b)=\Dc_s(\kappa,\underline{\Nc}_b)$, and therefore, it suffices to prove that $\Dc_s(\kappa,\underline{\Nc})=\Dc_s(\kappa,\underline{\Nc_b})$.
\begin{itemize}
%Let $D\in{\rm int}(\Dc_s(\kappa,\underline{\Nc}_b))$. It suffices to show that $D\in\Dc_s(\kappa,\underline{\Nc})$. 
\item[i.] $\Dc_s(\kappa,\underline{\Nc}_b)\subseteq\Dc_s(\kappa,\underline{\Nc})$: Note that ${\Nc}$ and $\Nc_b$ are identical except that for each $e\in\Ec$, DMC $(\Xc_e,p(y_e|x_e),\Yc_e)$ in  $\Nc$  is replaced by a bit pipe of capacity $C=\max_{p(x_e)}I(X_e;Y_e)$ in $\Nc_b$. We next show that  any code for network $\underline{\Nc}_b$ can be  operated on $\underline{\Nc}$  with a similar expected distortion.  Fix any code of source blocklength $LN$, channel blocklength $nN$ and expected distortion matrix $D$ for $N$-fold stacked network $\underline{\Nc}_b$. Now consider a $pN$-fold stacked network  $\underline{\tilde{\Nc}}_b$. By partitioning the $pN$ layers into $p$ stacks, each consisting of $N$ layers, and then applying the  code independently to these stacks, we can construct a code for network $\underline{\tilde{\Nc}}_b$, which has the same  expected distortion matrix $D$. Consider a $pM$-fold stacked network $\underline{\Nc}$, with $M>N$.  Using the mentioned strategy to construct a code for the $pN$-fold stacked network from the code given for the $N$-fold network, at each time step $t$, each bit pipe $e$ in $\underline{\tilde{\Nc}}_b$ sends a message of at most $p\lfloor NC_e\rfloor$ bits across  the $pN$ copies of edge $e$ in $\underline{\tilde{\Nc}}_b$. To operate the same code on network $\underline{\Nc}$, we need to send the same information across the $pM$ copies of DMC $(\Xc_e,p(y_e|x_e),\Yc_e)$ in $\underline{\Nc}$. To achieve this goal, we use a channel code of blocklength $pM$ operating at rate $R_e<C_e$. By choosing $M= \lceil NC_e/R_e\rceil$, we guarantee that $pMR_e \geq pNC_e$. Hence  the $M$ copies of DMC $(\Xc_e,p(y_e|x_e),\Yc_e)$ in $\underline{\Nc}$ carry the same information as the $N$ copies of bit pipe $e$ in $\underline{\Nc}_b$. Since the capacity of  DMC $(\Xc_e,p(y_e|x_e),\Yc_e)$ equals $C_e$, $R_e$ can be made arbitrarily close to $C_e$. The rate of the code for $\underline{\Nc}$ is
\[
{ pNL \over pMn} ={ NL \over \left\lceil {NC_e / R_e}\right\rceil n}, 
\]
which can be made arbitrary close to $\kappa=L/n$.

Let $P^{({\rm M})}_{e}$ denote the maximal probability of error for  the channel code of blocklength $pM$ used over the $pM$ copies  of DMC $(\Xc_e,p(y_e|x_e),\Yc_e)$ in $\underline{\Nc}$. Let $P^{({\rm M})}_{\max}=\max_{e\in\Ec}P^{({\rm M})}_{e}$.  The code for each channel $e$ is used $n$ times $-$ once for each $t\in\{1,\ldots,n\}$.  Errors in the channel code for $e$ increase the distortion achieved by applying the code for $\underline{\tilde{\Nc}}_b$ across $\underline{\Nc}$. We can bound this  increase in  the expected average distortion using the union bound.   More precisely, let $\Rc$ denote the event that there is a decoding error in at least  one of the channels $e\in\Ec$ at some time step $t\in\{1,2,\ldots,n\}$. Since the sources and channel codes are independent, 
\begin{align*}
\E[d_{pNL}(U^{(a),pNL},\hat{U}^{(a\to b),pNL})] &= \E[d_{pNL}(U^{(a),pNL},\hat{U}^{(a\to b), pNL})|\Rc^c]\P(\Rc^c)\nonumber\\
&\; \;\;+  \E[d_{pNL}(U^{(a), pNL},\hat{U}^{(a\to b),pNL})|\Rc]\P(\Rc)\nonumber\\
& \leq D(a,b) + n|\Ec|P^{({\rm M})}_{\max}d_{\max},
\end{align*}
for each $(a,b)\in\Vc^2$. Therefore, for fixed $n$ and $N$, letting $p\to\infty$,  $|\E[d_{pNL}(U^{(a),pNL},\hat{U}^{(a\to b),pNL})] - D(a,b)|$ can be made arbitrarily small for each $(a,b)\in\Vc^2$.

\item[ii.] $\Dc_s(\kappa,\underline{\Nc})\subseteq\Dc_s(\kappa,\underline{\Nc}_b)$: Let $D\in \Dc(\kappa,\Nc)$. We prove that $D\in\Dc_s(\kappa,\underline{\Nc}_b)$. Consider a code defined on $\Nc$ with source blocklength $L$, channel blocklength $n$, and  an expected distortion matrix that is component-wise upper-bounded by   $D+\e\cdot\mathbf{1}$.   Applying this  code independently in each layer of $N$-fold stacked network $\underline{\Nc}$ gives a code for $\underline{\Nc}$ with $D_N(a,b)\leq D(a,b)+\e$, for all $(a,b)\in\Vc^2$. Throughout the rest of the proof, $n$ and $L$, corresponding to the source and channel blocklengths of the  mentioned code, are fixed. To simulate the performance of this code on the stacked version of $\Nc_b$, we let the number of layers $N$  go to infinity. As shown in \cite{KoetterE:11} any code for an $N$-fold stacked network can be unraveled across time to give a single-layer code with the same performance. The blocklength for that code goes to infinity as $N$ grows without bound.

We first show that for  identically distributed memoryless sources, the performance of the code given the realization of  $(\Xb_{e,1},\Yb_{e,1})$  depends only on the empirical distribution $\{\pi(x_e,y_e|\Xb_{e,1},\Yb_{e,1})\}_{(x_{e,1},y_{e,1})\in\Xc_e\times\Yc_e}$ of $(\Xb_{e,1},\Yb_{e,1})$. Here the subscript $1$ refers to time $t=1$. After establishing this, we use the result proved in \cite{CuffP:09} and show that at time $t=1$ we can emulate the behavior of the noisy link across a bit pipe of the same capacity. For the rest of the proof, let $\Ub=\{U_t\}$ denote an i.i.d. source observed at some node in $a\in\Vc$ and $\hat{\Ub}=\{\Uh_t\}$ denote its reconstruction at some other node $b\in\Vc\backslash\{a\}$.

In network $\Nc$, the expected distortion between source vector $U^L$ and its reconstruction $\Uh^L$ is
\begin{align}
&\E[d_L(U^{L},\hat{U}^{L})] =\nonumber\\ 
&\sum_{(x_{e,1},y_{e,1})\in\Xc_e\times\Yc_e}
 \E\left[d_L(U^{L},\hat{U}^{ L})\left|(X_{e,1},Y_{e,1})=(x_{e,1},y_{e,1})\right.\right]\nonumber\\
&\hspace{1.5cm}\times\P\left((X_{e,1},Y_{e,1})=(x_{e,1},y_{e,1})\right).\label{eq:single_layer_t1}
\end{align}

In the $N$-fold stacked network $\underline{\Nc}$,  the reconstruction of the corresponding $N$ independent copies of $U^L$ by reproduction $\Uh^{NL}$ satisfies 
\begin{align}
&\E \left[d_{NL}(U^{NL},\hat{U}^{NL})\right] \nonumber\\
&=\E\Big[{1\over N}\sum\limits_{\ell=1}^Nd_{L}\left(U^{\ell L}_{(\ell-1)L+1},\hat{U}^{\ell L}_{(\ell-1)L+1}\right) \times \nonumber\\
&\hspace{1cm} \sum_{(x_{e,1},y_{e,1})\in\Xc_e\times\Yc_e} \ind_{(\Xv_{e,1}(\ell),\Xv_{e,1}(\ell))=(x_{e,1},y_{e,1})} \Big]\nonumber\\
&={1\over N}\sum_{(x_{e,1},y_{e,1})\in\Xc_e\times\Yc_e}\sum\limits_{\ell=1}^N\nonumber\\
&\quad\quad\quad \E\left[d_{L}\left(U^{\ell L}_{(\ell-1)L+1},\hat{U}^{\ell L}_{(\ell-1)L+1}\right) \ind_{(\Xv_{e,1}(\ell),\Yv_{e,1}(\ell))=(x_{e,1},y_{e,1})}\right].\label{eq:i-1}
\end{align}
For any random variables $A$ and $B$, $\E[A\ind_{B=b}]=\sum_{a,b'}a\ind_{b'=b}p(a,b)=\sum_{a}ap(a,b)=p(b)\E[A|B=b]$.  Using this equality, and since the code used on $\underline{\Nc}$  applies the solution for $\Nc$ independently in each layer of  stacked network $\underline{\Nc}$, it follows that
\begin{align}
\E&\left[d_{L}\left(U^{\ell L}_{(\ell-1)L+1},\hat{U}^{\ell L}_{(\ell-1)L+1}\right) \ind_{(\Xv_{e,1}(\ell),\Yv_{e,1}(\ell))=(x_{e,1},y_{e,1})} \right] \nonumber\\
&=\E[d_{L}(U^{\ell L}_{(\ell-1)L+1},\hat{U}^{\ell L}_{(\ell-1)L+1})|(\Xv_{e,1}(\ell),\Yv_{e,1}(\ell))=(x_{e,1},y_{e,1})] \P((\Xv_{e,1}(\ell),\Yv_{e,1}(\ell))=(x_{e,1},y_{e,1}))
\nonumber\\
&=\E[d_{L}(U^L,\hat{U}^L)|(\Xv_{e,1}(\ell),\Yv_{e,1}(\ell))=(x_{e,1},y_{e,1})] P((\Xv_{e,1}(\ell),\Yv_{e,1}(\ell))=(x_{e,1},y_{e,1}))\label{eq:i-2}
\end{align}
where each conditional expectation of $d_L(U^L,\Uh^L)$ in \eqref{eq:i-2} equals the corresponding conditional expectation in \eqref{eq:single_layer_t1}.
Combining \eqref{eq:i-1} and \eqref{eq:i-2} yields
\begin{align}
\E \left[d_{NL}(U^{NL},\hat{U}^{NL})\right]&=\sum_{(x_{e,1},y_{e,1})\in\Xc_e\times\Yc_e}\E\left[d_L(U^{L},\hat{U}^{L})\left|(X_{e,1},Y_{e,1})=(x_{e,1},y_{e,1})\right.\right] 
\times \nonumber\\
&\quad\quad\quad{1\over N}\sum\limits_{\ell=1}^N \P((\Xv_{e,1}(\ell),\Yv_{e,1}(\ell))=(x_{e,1},y_{e,1}))\nonumber\\
&=\sum_{(x_{e,1},y_{e,1})\in\Xc_e\times\Yc_e} \E\left[d_L(U^{L},\hat{U}^{L})\left|(X_{e,1},Y_{e,1})=(x_{e,1},y_{e,1})\right.\right]\times\nonumber\\
&\hspace{2 cm} \E[ \pi(x_{e,1},y_{e,1}|\Xb_{e,1},\Yb_{e,1})].\label{eq:layer_t1}
\end{align}

Equations  \eqref{eq:single_layer_t1} and \eqref{eq:layer_t1} differ only in  their distributions on $\Xc_e\times\Yc_e$. Since each conditional expectation is finite (in particular, all are bounded by $d_{\max}$), we can replace channel $(\Xc_e,p(y_e|x_e),\Yc_e)$ by a bit pipe of capacity $C_e$ at time $t=1$, if we can find a coding scheme  across the layers of the stack for which, 
\begin{align}
\left|\P\left((X_{e,1},Y_{e,1})=(x_{e,1},y_{e,1})\right)-\E[ \pi(x_{e,1},y_{e,1}|\Xb_{e,1},\Yb_{e,1})]\right|,
\end{align}
can be made arbitrary small, for all ${(x_{e,1},y_{e,1})\in\Xc_e\times\Yc_e}$.

To prove that this is possible, consider a channel with input  drawn i.i.d. from some distribution $p(x_{e,1})$. We wish to build an emulation code with an  encoder that maps   $N$ source symbols, $\Xv_{e,1}\in\Xc_e^N$, to a message of $NR$ bits and a decoder that converts these $NR$ bits into a reconstruction block $\Yb_{e,1}\in\Yc_e^N$. We aim to use this  code to emulate  the DMC with transition probabilities  $\{p(y_{e,1}|x_{e,1})\}_{(x_{e,1},y_{e,1})\in\Xc_e\times\Yc_e}$ when the channel input is an i.i.d. process drawn according to   $p(x_{e,1})$. The codebook, $\Cc^{(N)}$, of this emulation code consists of $2^{NR}$ codewords, $\{\Yv_{e,1}[1],\Yv_{e,1}[2],\ldots,\Yv_{e,1}[2^{NR}]\}$, each drawn independently i.i.d.~according to $p(y_{e,1})=\sum_{x_{e,1}\in\Xc_e}p(x_{e,1})p(y_{e,1}|x_{e,1})$.  The encoder assigns message $M\in\{1,\ldots,2^{NR}\}$  to input  sequence $\Xv_{e,1}$, if $(\Xv_{e,1},\Yv_{e,1}[M])\in\Tc^{(N)}_{\e}(X_{e,1},Y_{e,1})$. If there are multiple such messages in the codebook, the encoder chooses the one with the smallest index. If there exist no  codewords in  $\Cc^{(N)}$ that are jointly typical with $\Xv_{e,1}$, then the encoder assigns message $M=1$ to $\Xv_{e,1}$.  After receiving message $M$, the decoder outputs $\Yv_{e,1}[M]$. Let $\{\pi(x_{e,1},y_{e,1}|\Xb_{e,1},\Yb_{e,1})\}_{(x_{e,1},y_{e,1})\in\Xc_e\times\Yc_e}$ be the the joint empirical  distribution between the channel input and channel output induced by running the emulation code across the $N$ copies of the bit pipe at time $t=1$. In \cite{CuffP:09}, it is shown that, the described  code can emulate channel $(\Xc_e,p(y_{e,1}|x_{e,1}),\Yc_e)$  by a  bit pipe of rate $R$, provided that $R>I(X_{e,1};Y_{e,1})$. The given emulation ensures that  the total variation between $\pi(x_{e,1},y_{e,1}|\Xb_{e,1},\Yb_{e,1})$ and $p(x_{e,1},y_{e,1})=p(x_{e,1})p(y_{e,1}|x_{e,1})$ can be made arbitrarily small as the blocklength $N$ grows without bound. In other words, there exists a sequence of codes over the bit pipe such that
\begin{align}
\left\|\boldsymbol\pi-{\bf p}\right\|_{\rm TV}\stackrel{N\to\infty}{\longrightarrow} 0,\label{eq:conv_as}
\end{align}
almost surely. (Here $\boldsymbol\pi$ and $\bf p$ are vectors describing distributions ($\pi(x_{e,1},y_{e,1}|\Xb_{e,1},\Yb_{e,1}): (x_{e,1},y_{e,1})\in\Xc_e\times\Yc_e$) and ($p(x_{e,1},y_{e,1}): (x_{e,1},y_{e,1})\in\Xc_e\times\Yc_e$) respectively.) Although Theorem 3 in \cite{CuffP:09} only guarantees convergence of $\boldsymbol\pi$ to ${\bf p}$ in probability, we can also prove almost sure convergence of  $\boldsymbol\pi$ to ${\bf p}$ using Borel-Cantelli Lemma. Let $\gamma=R-I(X_{e,1};Y_{e,1})$. Let $\Yv_{e,1}(\Xv_{e,1})$ denote the codeword in $\Cc^{(N)}$ that is assigned to $\Xv_{e,1}$ by the emulation encoder. For $\e>0$, define the error event
\[
\Ec^{(N)}=\{(\Xv_{e,1},\Yv_{e,1}(\Xv_{e,1})):\left\|\boldsymbol\pi-{\bf p}\right\|_{\rm TV}>\e\}.
\] 
Breaking the error event into two parts and then applying the union bound, Hoeffding's inequality, and the joint typicality lemma from \cite{ElGamalK_book} gives
\begin{align}
\P(\Ec^{(N)})&\leq \P(\Xv_{e,1}\notin \Tc_{\e}^{(N)}(X_{e,1}))+\left[\P((\Xv_{e,1},\Yv_{e,1}[1])\notin\Tc_{\e}^{(N)}(X_{e,1},Y_{e,1}))\right]^{2^{NR}}\nonumber\\
&\leq \sum_{x_e\in\Xc_e}\P(|\pi(x_e|\Xv_{e,1})-p(x_e)|>p(x_e)\e)+ e^{-2^{N(\gamma-\d(\e))}}\nonumber\\
&\leq  \sum_{x_e\in\Xc_e} 2e^{-2N\e^2p^2(x_e)}+ e^{-2^{N(\gamma-\d(\e))}}\nonumber\\
&\leq   2|\Xc_e|e^{-2N\e^2\min\limits_{x_e\in\Xc_e}p^2(x_e)}+ e^{-2^{N(\gamma-\d(\e))}},
\end{align}
where $\d(\e)=\e(H(Y_{e,1})+H(Y_{e,1}|X_{e,1}))\to 0$, as $\e\to 0$. Therefore, 
\[
\sum_{N=1}^{\infty}\P(\Ec^{(N)})<\infty,
\]
and hence \eqref{eq:conv_as} holds almost surely, by the Borel-Cantelli Lemma.

We next combine the emulation code with the code for $\underline{\Nc}$. The code emulates channel $p(y_e|x_e)$ at time $t=1$ across the $N$ layers of  stacked network $\underline{\Nc}_b'$ that replaces $p(y_e|x_e)$ by a link of capacity $R>C$, only at time $t=1$. The given code for $\underline{\Nc}$ can be run across $\underline{\Nc}_b'$ with expected distortion bounded as
\begin{align*}
&\E \left[d_{NL}(U^{NL},\hat{U}^{NL})\right] \nonumber\\
&= \sum_{(x_e,y_e)\in\Xc_e\times\Yc_e} \E\left[d_L(U^{L},\hat{U}^{L})\left|(X_{e,1},Y_{e,1})=(x_e,y_e)\right.\right]\times\nonumber\\
&\hspace{2 cm} \E[ \pi(x_e,y_e|\Xb_{e,1},\Yb_{e,1})]\nonumber\\
&\leq \sum_{(x_e,y_e)\in\Xc_e\times\Yc_e} \E\left[d_L(U^{L},\hat{U}^{L})\left|(X_{e,1},Y_{e,1})=(x_e,y_e)\right.\right](p(x_e,y_e)+\e)\nonumber\\
&\leq \E[d_L(U^L,\Uh^L)]+\e d_{\max}.\nonumber
\end{align*}

Thus we can replace the noisy link by a bit-pipe at time $t=1$. We use induction to extend this result to the next $n-1$ time steps. Note that in the original network
\begin{align}
&\E[d_L(U^{L},\hat{U}^{L})] =\nonumber\\
& \sum_{(x_e^n,y_e^n)\in\Xc_e^n\times\Yc_e^n} \E\left[d_L(U^{L},\hat{U}^{ L})\left|  (X_e^n,Y_e^n)=(x_e^n,y_e^n)\right.\right]\nonumber\\
&\hspace{1cm}\times\P\left((X_e^n,Y_e^n)=(x_e^n,y_e^n)\right).
\end{align}

On the other hand, using the same analysis used in deriving \eqref{eq:layer_t1}, in the $N$-fold stacked network,
\begin{align}
&\E \left[d_{NL}(U^{NL},\hat{U}^{NL})\right]\nonumber\\
& = \sum_{ (x_e^n,y_e^n)\in\Xc_e^n\times\Yc_e^n} \E\left[d_L(U^{L},\hat{U}^{L})\left| (X_e^n,Y_e^n)=(x_e^n,y_e^n)\right.\right]\nonumber\\
&\hspace{1.5cm}\times \E\left[ \pi(x_e^n,y_e^n|\Xb_e^n,\Yb_e^n)\right].\label{eq:layer_tN}
%&\hspace{1cm}\times \E\left[ \frac{\left|\left\{\ell:(\Xb^n(\ell),\Yb^n(\ell))=(x^n,y^n)\right\}\right|}{N}\right],\label{eq:layer_tN}
\end{align}
Here $\Xb_e^n=(\Xb_{e,1},\Xb_{e,2},\ldots,\Xb_{e,n})$ and $\Yb_e^n=(\Yb_{e,1},\Yb_{e,2},\ldots,\Yb_{e,n})$ refer to the inputs and outputs of  channel $e$ in the $N$ layers of the stacked network, for times $t=1,2,\ldots,n$, while $\Xb_e^n(\ell)$ and $\Yb_e^n(\ell)$ correspond to the inputs and outputs  of the emulated channel at layer $\ell$ for times $t=1,2,\ldots,n$, and 
\[
\pi(x_e^n,y_e^n|\Xb_e^n,\Yb_e^n)=\frac{\left|\left\{\ell:(\Xb_e^n(\ell),\Yb_e^n(\ell))=(x_e^n,y_e^n)\right\}\right|}{N}.
\]
Therefore, we need to show that by appropriate coding over the bit-pipes,
\begin{align}
&\left|\P\left((X_e^n,Y_e^n)=(x_e^n,y_e^n)\right)-\pi(x_e^n,y_e^n|\Xb_e^n,\Yb_e^n)\right|\label{eq:p_e}
\end{align}
can be made arbitrarily small.  Note that
\begin{align}
&\P\left((X_e^n,Y_e^n)=(x_e^n,y_e^n)\right)=\nonumber\\
& \prod\limits_{t=1}^n\P\left((X_{e,t},Y_{e,t})=(x_{e,t},y_{e,t})\left|(X_e^{t-1},Y_e^{t-1})=(x_e^{t-1},y_e^{t-1})\right.\right),\label{eq:single_layer_t_all}
\end{align}
and 
\begin{align} 
&\pi(x_e^n,y_e^n|\Xb_e^n,\Yb_e^n)=\prod\limits_{t=1}^n\frac{\pi(x_e^t,y_e^t|\Xb_e^t,\Yb_e^t)}
{\pi(x_e^{t-1},y_e^{t-1}|\Xb_e^{t-1},\Yb_e^{t-1})},\label{eq:layer_t_all}
\end{align}
where for $t=1$ 
\[
\pi(x_e^{t-1},y_e^{t-1}|\Xb_e^{t-1},\Yb_e^{t-1})=1.
\]
We have already proven that we can make the first term in the product in \eqref{eq:layer_t_all}  converge to the first term in  the product in \eqref{eq:single_layer_t_all}  with probability one. We next prove by induction that the same result is true for each subsequent term in \eqref{eq:single_layer_t_all} and  \eqref{eq:layer_t_all}. Since all of the terms in \eqref{eq:layer_t_all} are positive and  upper-bounded by $1$, so too is their product. Thus,  the Dominated Convergence Theorem (see, for example, \cite{Durrett:96})  shows that \eqref{eq:p_e} can be made arbitrarily small provided that each term converges almost surely.

To apply induction, assume that there exist $t-1$ emulation codes whose application makes the first $t-1$ terms in \eqref{eq:layer_t_all} each converge to the corresponding term in \eqref{eq:single_layer_t_all}  almost surely.  Using this inductive hypothesis, we prove that the  $t^{\rm th}$ term in \eqref{eq:layer_t_all} converges to the $t^{\rm th}$ term in \eqref{eq:single_layer_t_all} as well.

Given the inductive hypothesis that 
\begin{align}
\frac{\pi(x_e^{t'},y_e^{t'}|\Xb_e^{t'},\Yb_e^{t'})}{\pi(x_e^{t'-1},y_e^{t'-1}|\Xb_e^{t'-1},\Yb_e^{t'-1})}\to p(x_{e,t'},y_{e,t'}|x_e^{t'-1},y_e^{t'-1})
\end{align}
almost surely, for all $(x_e^{t'},y_e^{t'})$ and all $t'\leq t-1$, it follows  that 
\begin{align}
\pi(x_e^{t-1},y_e^{t-1}|\Xb_e^{t-1},\Yb_e^{t-1})\to p(x_e^{t-1},y_e^{t-1})
\end{align}
almost surely, for all $(x^{t-1},y^{t-1})$. Since the two networks apply precisely the same deterministic code to the channel outputs at time $t-1$ to create the channel inputs  at time $t$, this bound implies
\begin{align}
\pi(x_e^{t},y_e^{t-1}|\Xb_e^{t},\Yb_e^{t-1})\to p(x_e^{t},y_e^{t-1})
\end{align}
almost surely, for all $(x^{t},y^{t-1})$ as well. We now show that if the emulation code used at time $t$ is generated independently of the codes used at times $1,2,\ldots,t-1$, then for each $(x_e^t,y_e^t)$, 
\begin{align}
\frac{\pi(x_e^t,y_e^t|\Xb_e^t,\Yb_e^t)}{\pi(x_e^{t},y_e^{t-1}|\Xb_e^{t},\Yb_e^{t-1})}\to p(y_{e,t}|x_{e,t})
\end{align}
almost surely, where $p(y_{e,t}|x_{e,t})=\P(Y_e=y_{e,t}|X_e=x_{e,t})$. Note that
\begin{align}
&\P(\Yv_{e,t}(1)=y_{e,t}|(\Xv_e^{t}(1),\Yv_e^{t-1}(1))=(x_e^t,y_e^{t-1}))\nonumber\\
&=\sum_{s^{N}\in\Xc_e^{N}:s_1=x_{e,1}} \P(\Yv_{e,t}(1)=y_{e,t},\Xv_{e,t}(2:N)=s_2^{N}|(\Xv_e^{t}(1),\Yv_e^{t-1}(1))=(x_e^t,y_e^{t-1}))\nonumber\\
&=\sum_{s^{N}\in\Xc_e^{N}:s_1=x_{e,1}} \P(\Xv_{e,t}(2:N)=s_2^{N}|(\Xv_e^{t}(1),\Yv_e^{t-1}(1))=(x_e^t,y_e^{t-1}))\nonumber\\
&\hspace{2cm}\times\P(\Yv_{e,t}(1)=y_{e,t}|\Xv_{e,t}=s^{N},(\Xv_e^{t-1}(1),\Yv_e^{t-1}(1))=(x_e^t,y_e^{t-1}))\nonumber\\
&=\sum_{s^{N}\in\Xc_e^{N}:s_1=x_{e,1}} \P(\Xv_{e,t}(2:N)=s_2^{N}|(\Xv_e^{t}(1),\Yv_e^{t-1}(1))=(x_e^t,y_e^{t-1}))\nonumber\\
&\hspace{2cm}\times\P(\Yv_{e,t}(1)=y_{e,t}|\Xv_{e,t}=s^{N}),\label{eq:limit0}
\end{align}
where the last equality holds because $(\Xv_e^{t-1},\Yv_e^{t-1})\to \Xv_{e,t} \to \Yv_{e,t}$ since the emulation code maps $\Xv_{e,t}$ to $\Yv_{e,t}$ independently of all prior channel inputs and outputs.

Since each network layer independently operates an identical code, and codewords in the emulation codebook are generated according to an i.i.d.~distribution, it follows that
\begin{align*}
\P(\Yv_{e,t}(1)=y_{e,t}|\Xv_{e,t}=s^{N})=\P(\Yv_{e,t}(\ell)=y_{e,t}|\Xv_{e,t}=s^{N})
\end{align*}
for any $\ell$ such that $s_{\ell}=x_{e,t}$ under the operation of a random emulation code.
Therefore,
\begin{align}
&\P(\Yv_{e,t}(1)=y_{e,t}|\Xv_{e,t}=s^{N})\nonumber\\
&={1\over N\pi(x_{e,t}|s^N)}\sum_{\ell: s_{\ell}=x_{e,t}}\P(\Yv_{e,t}(\ell)=y_{e,t}|\Xv_{e,t}=s^{N})\nonumber\\
&={1\over N\pi(x_{e,t}|s^N)}\sum_{\ell: s_{\ell}=x_{e,t}}\E[\ind_{\Yv_{e,t}(\ell)=y_{e,t}}|\Xv_{e,t}=s^{N}]\nonumber\\
&=\E\left[\left.{1\over N\pi(x_{e,t}|s^N)}\sum_{\ell: s_{\ell}=x_{e,t}}\ind_{\Yv_{e,t}(\ell)=y_{e,t}}\right|\Xv_{e,t}=s^{N}\right]\nonumber\\
%&=\E\left[\left.{1\over N\pi(x_{e,t}|s^N)}\sum_{\ell: s_{\ell}=x_{e,t}}\ind_{\Yv_{e,t}(\ell)=y_{e,t}}\right|\Xv_{e,t}=s^{N}\right]\nonumber\\
&=\E\left[\left.{\pi(x_{e,t},y_{e,t}|\Xv_{e,t},\Yv_{e,t})\over \pi(x_{e,t}|\Xv_{e,t})}\right|\Xv_{e,t}=s^{N}\right].\label{eq:limit1}
\end{align}
By our inductive assumption and an argument similar to the one used in Remark 1, if $s^N\in\Tc_{\e}^{(N)}(X_{e,t})$, for $N$ large enough
\begin{align}
&\left|\E\left[\left.{\pi(x_{e,t},y_{e,t}|\Xv_{e,t},\Yv_{e,t})\over \pi(x_{e,t}|\Xv_{e,t})}\right|\Xv_{e,t}=s^N\right]-p(y_{e,t}|x_{e,t})\right|\nonumber\\
&<\e.\label{eq:limit2}
\end{align}

Combining \eqref{eq:limit0}, \eqref{eq:limit1} and \eqref{eq:limit2}, it follows that
\begin{align}
&\P(\Yv_{e,t}(1)=y_{e,t}|(\Xv_e^{t}(1),\Yv_e^{t-1}(1))=(x_e^t,y_e^{t-1}))\nonumber\\
&=\sum_{s^N\in\Tc_{\e}^{(N)}(X_{e,t}):s_1=x_{e,1}} \P(\Xv_{e,t}(2:N)=s_2^{N}|(\Xv_e^{t}(1),\Yv_e^{t-1}(1))=(x_e^t,y_e^{t-1}))\nonumber\\
&\hspace{2cm}\times\P(\Yv_{e,t}(1)=y_{e,t}|\Xv_{e,t}=s^{N})\nonumber\\
&\hspace{0.5cm}+\sum_{s^N\notin\Tc_{\e}^{(N)}(X_{e,t}):s_1=x_{e,1}} \P(\Xv_{e,t}(2:N)=s_2^{N}|(\Xv_e^{t}(1),\Yv_e^{t-1}(1))=(x_e^t,y_e^{t-1}))\nonumber\\
&\hspace{2cm}\times\P(\Yv_{e,t}(1)=y_{e,t}|\Xv_{e,t}=s^{N})\nonumber\\
&\;\;\; \leq (p(y_{e,t}|x_{e,t}) + \e)\P\left(\Xv_{e,t}\in\Tc_{\e}^{(N)}(X_{e,t}) |(\Xv_e^{t}(1),\Yv_e^{t-1}(1))=(x_e^t,y_e^{t-1})\right)\nonumber\\
&\hspace{0.5cm} +\P(\Xv_{e,t}\notin \Tc_{\e}^{(N)}(X_{e,t})|(\Xv_e^{t}(1),\Yv_e^{t-1}(1))=(x_e^t,y_e^{t-1})).\label{eq:34}
\end{align}
Similarly, 
\begin{align}
&\P(\Yv_{e,t}(1)=y_{e,t}|(\Xv_e^{t}(1),\Yv_e^{t-1}(1))=(x_e^t,y_e^{t-1}))\nonumber\\
&\;\;\; \geq (p(y_{e,t}|x_{e,t}) - \e)\P\left(\Xv_{e,t}\in\Tc_{\e}^{(N)}(X_{e,t}) |(\Xv_e^{t}(1),\Yv_e^{t-1}(1))=(x_e^t,y_e^{t-1})\right).\label{eq:35}
\end{align}
But, if $\P((\Xv_e^{t}(1),\Yv_e^{t-1}(1))=(x_e^t,y_e^{t-1}))\neq 0$, then 
\begin{align}
&\P\left(\Xv_{e,t}\notin \Tc_{\e}^{(N)}(X_{e,t})|(\Xv_e^{t}(1),\Yv_e^{t-1}(1))=(x_e^t,y_e^{t-1})\right)\nonumber\\
&\;\;\; ={\P\left(\Xv_{e,t}\notin \Tc_{\e}^{(N)}(X_{e,t}),(\Xv_e^{t}(1),\Yv_e^{t-1}(1))=(x_e^t,y_e^{t-1})\right) \over \P\left((\Xv_e^{t}(1),\Yv_e^{t-1}(1))=(x_e^t,y_e^{t-1})\right)} \nonumber\\
&\;\;\; \leq {\P\left(\Xv_{e,t}\notin \Tc_{\e}^{(N)}(X_{e,t})\right) \over \P\left((\Xv_e^{t}(1),\Yv_e^{t-1}(1))=(x_e^t,y_e^{t-1})\right)}\to 0\label{eq:36}
\end{align}
as $N\to\infty$, and hence $\P(\Xv_{e,t}\in\Tc_{\e}^{(N)}(X_{e,t}) |(\Xv_e^{t}(1),\Yv_e^{t-1}(1))=(x_e^t,y_e^{t-1}))\to 1$, as $N\to\infty$. Therefore, combining \eqref{eq:34}, \eqref{eq:35}, and \eqref{eq:36}, it follows that, for each $(x_e^t,y_e^t)$, 
\begin{align}
&\P(\Yv_{e,t}(1)=y_{e,t}|(\Xv_e^{t}(1),\Yv_e^{t-1}(1))=(x_e^t,y_e^{t-1}))\to p(y_{e,t}|x_{e,t}),
\end{align}
almost surely, as $N$ grows to infinity.
 
This concludes the proof, because it shows that, for each $\ell\in\{1,2,\ldots,N\}$, as the number of layers $N$ grows, $\Yv_{e,t}(\ell)$ becomes independent of $(\Xv_e^{t-1}(\ell),\Yv_e^{t-1}(\ell))$ conditioned on  $\Xv_{e,t}(\ell)$, and its conditional distribution converges to $p(y_{e,t}|x_{e,t})$ corresponding to the transition probability of channel $e$. 
\end{itemize}
\end{proof}

\begin{remark}
The first part of the proof of Theorem \ref{thm:main} is not specific to DMCs, and shows that $\Dc(\kappa,\Nc_b)\subseteq\Dc(\kappa,\Nc)$ for all networks $\Nc$ of (discrete or continuous) point-to-point channels.
\end{remark}

%******************************************************************************
%******************************************************************************

\section{Continuity: zero-distortion versus lossless}\label{sec:zero-dist}

\begin{figure}[t]
\begin{center}
\psfrag{U}[l]{\footnotesize $U^L$}
\psfrag{Uh}[l]{\footnotesize $\Uh^L$}
\psfrag{X}[l]{\footnotesize $X^n$}
\psfrag{Y}[l]{\footnotesize $Y^n$}
\psfrag{R}[l]{\footnotesize $LR$}
%\psfrag{Encoder}[l]{\footnotesize{Enc.}}
%\psfrag{Decoder}[l]{\footnotesize{Dec.}}
\includegraphics[width=10cm]{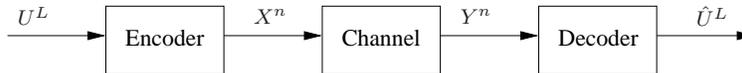}\caption{Simple point-to-point channel}\label{fig:P2P}
\end{center}
\end{figure}
The distortion criteria for lossless source coding and lossy source coding with a distortion constraint of zero are different. In lossless coding, we require that the probability of error in reconstructing a vector of source symbols goes to zero as the blocklength of that vector grows without bound. In lossy  coding, we require that the per symbol distortion between the source vector and its reconstruction approach zero for sufficiently long blocklengths. As a result, even under the Hamming distortion measure, distortion 0 reconstructions do not necessarily meet the lossless source reconstruction criterion.  Before investigating the relationship between these problems in a generic network $\Nc$ of the form defined in Section \ref{sec:setup}, we consider some special cases where the relationship is known.  
Consider the simple point-to-point network shown in Fig.~\ref{fig:P2P}. Let the source $U$ be i.i.d. and distributed according to $p(u)$, and let $C=\max_{p(x)}I(X;Y)$ denote the capacity of the point-to-point channel connecting the source and the  destination.   The minimal required rate for describing the source $U$ at distortion $D$ is \cite{cover} 
$R(D)=\min_{p(\hat{u}|u):\E[d(U,\Uh)]\leq D}I(U;\Uh).$ In such point-to-point networks separation of source coding and channel coding is known to be optimal \cite{Shannon:48}. Hence to describe the source at distortion $D$, we need $C\geq \kappa R(D)$.  Evaluating $R(D)$ at $D=0$ gives
\[
R(0)=\min\limits_{p(\hat{u}|u):\E[d(U,\Uh)]=0}I(U;\Uh)=I(U;U)=H(U),
\]
where $H(U)$ is the entropy rate of the source $U$. Since the minimal  rate for lossless reconstruction of  the source $U$ is also the entropy rate, the zero-distortion and lossless reconstruction rate regions coincide in this simple network. Explicit characterizations of the multi-dimensional rate-distortion regions for general multiuser networks are unknown. Therefore, proving or disproving the equivalence of zero-distortion and lossless reconstruction rate-regions in such networks requires more elaborate analysis. In his Ph.D. thesis, W.H.~Gu proved that in noiseless networks consisting of  point-to-point bit-pipes,  zero-distortion and lossless reconstruction rate regions coincide \cite{Gu_thesis}.

In this section, we prove the equivalence of zero-distortion reconstruction  and lossless reconstruction in general networks described by multiuser discrete memoryless channels (mDMCs) with statistically dependent sources. More precisely, we prove that  in any mDMC with independent or dependent sources, lossless reconstruction is achievable if and only if zero-distortion reconstruction is achievable. 

Consider network $\Nc$  shown in Fig.~\ref{fig:mDMC}, which consists of a general mDMC described by  \[p(y^{(1)},\ldots,y^{(m)}|x^{(1)},\ldots,x^{(m)}).\] Let $\Vc\triangleq\{1,\ldots,m\}$. Node $a\in\Vc$ observes source process $\Uv^{(a)}$ and is interested in reconstructing sources observed by the other nodes. The coding operations are very similar to the  case of wired networks. Each node observes a  block of length $L$ of its own source symbols and describes them to the other nodes in $n$ channel uses. As before, the coding rate  $\kappa$ is defined as $\kappa={L/n}$. At each each time $t=1,\ldots,n$, node $a$  generates channel input  $X^{(a)}_{t}$ as a function its own source block $U^{(a),L}$ and its received channel outputs up to time $t-1$, \ie $Y^{(a),t-1}$. In other words, $X^{(a)}_{t}=X^{(a)}_{t}(U^{(a),L},Y^{(a),t-1})$. The set of achievable distortion matrices on network $\Nc$ at rate $\kappa$ is denoted by $\Dc(\kappa,\Nc)$.  Throughout this section we assume that for any $(a,b)\in\Vc^2$, $d^{(a\to b)}(u,\uh)=0$ if and only if $u=\uh$.

Given any $D\in\Dc(\kappa,\Nc)$, let 
\[
\Lc(D)\triangleq\{(a,b):D(a,b)=0\}.
\]

\begin{figure}
\begin{center}
\psfrag{p}{\footnotesize $p(y^{(1)},\ldots,y^{(m)}|x^{(1)},\ldots,x^{(m)})$}
\psfrag{x}{\footnotesize $\to X^{(a)} $}
\psfrag{y}{\footnotesize $\leftarrow Y^{(a)}$}
\includegraphics[width=6.2cm]{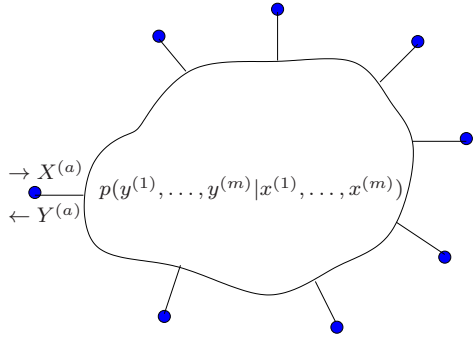} \caption{General multiuser discrete memoryless channel (mDMC)}\label{fig:mDMC}
\end{center}
\end{figure}

\begin{theorem}\label{thm:lossless_zero_dist}
Fix any non-negative matrix $D=(D(a,b):(a,b)\in\Vc^2)$ with $|\Lc(D)|>0$. For any $(a,b)\in\Lc(D)$, assume that $H(U^{(a)}|(U^{(c)})_{c\in\Vc\backslash a})>0$. Then  $D\in\Dc(\kappa,\Nc)$ if and only if, for any $\e>0$ there exists integers $L$ and $n\leq L/(\kappa-\e)$, for which we can design a code of source blocklength $L$ and channel blocklength  $n$ that satisfies 
\[
 \P(U^{(a),L}\neq \Uh^{(a\to b),L})\leq \e,
 \] 
 for all $(a,b)\in\Lc(D)$ and 
 \[
 \E[d_L(U^{(a),L},\Uh^{(a\to b),L})]\leq D(a,b)+\e,
 \] 
 for all $(a,b)\in\Vc^2\backslash\Lc(D)$. 
 \end{theorem}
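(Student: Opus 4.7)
The two directions of Theorem~\ref{thm:lossless_zero_dist} are highly asymmetric in difficulty, so I plan to treat them separately. The ``if'' direction is a simple expected-value bound: given a code with the stated property, for $(a,b)\in\Lc(D)$ one has
\[
\E[d_L^{(a\to b)}(U^{(a),L},\Uh^{(a\to b),L})]\leq \e\, d_{\max}+(1-\e)\cdot 0=\e\, d_{\max},
\]
so the same code already witnesses $D\in\Dc(\kappa,\Nc)$ since $\e$ is arbitrary. The only-if direction is where the work lies.

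For the only-if direction my plan is to start from a lossy code of rate just below $\kappa$ and graft on a short Slepian--Wolf-style refinement phase that upgrades small expected distortion into small block error on every pair in $\Lc(D)$. Fix $\e>0$ and an auxiliary $\delta>0$ to be tuned later. Because $D\in\Dc(\kappa,\Nc)$, for large $L$ there is a base code $\Cc_0$ of blocklength $(L,n_0)$ with $L/n_0\geq\kappa-\delta$ and expected distortion matrix at most $D+\delta\mathbf{1}$. The assumption that $d^{(a\to b)}(u,\uh)=0$ iff $u=\uh$ gives $d^{(a\to b)}_{\min}:=\min_{u\neq\uh}d^{(a\to b)}(u,\uh)>0$, so for $(a,b)\in\Lc(D)$ the expected per-symbol Hamming error rate of $\Cc_0$ is at most $\alpha:=\delta/d^{(a\to b)}_{\min}$. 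Decomposing $U^{(a),L}$ as $\Uh^{(a\to b),L}$ plus an error pattern of expected Hamming weight at most $\alpha L$ and applying the maximum-entropy bound for such distributions yields
\[
\tfrac{1}{L}\H(U^{(a),L}\mid\Uh^{(a\to b),L})\leq h(\alpha)+\alpha\log(|\Uc^{(a)}|-1)=:r(\delta),
\]
which tends to zero as $\delta\to 0$.

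The refinement phase itself will be a Slepian--Wolf random binning: each source node $a$ with $\{b:(a,b)\in\Lc(D)\}\neq\emptyset$ maps $U^{(a),L}$ into roughly $2^{L(r(\delta)+\e_0)}$ bins, and each receiver $b$, using $\Uh^{(a\to b),L}$ as side information, jointly typical-decodes to recover $U^{(a),L}$ exactly with probability at least $1-\e/2$. To transport the bin indices through the mDMC I will invoke the hypothesis $H(U^{(a)}\mid(U^{(c)})_{c\in\Vc\backslash a})>0$: together with the fact that $\Cc_0$ already achieves arbitrarily small distortion at $b$ for $(a,b)\in\Lc(D)$, a cut-set-type argument forces the mDMC to admit a strictly positive achievable $a\to b$ information rate $R_{ab}>0$. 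Reserving $n_1=\bigl\lceil L|\Lc(D)|(r(\delta)+\e_0)/\min_{(a,b)\in\Lc(D)}R_{ab}\bigr\rceil$ additional channel uses and running a reliable channel code on them then delivers every bin index with vanishing error. Since $n_1/n_0=O(r(\delta))\to 0$, the combined rate $L/(n_0+n_1)$ exceeds $\kappa-\e$ for small enough $\delta,\e_0$, a union bound over the Slepian--Wolf and channel-code failure events per $(a,b)\in\Lc(D)$ keeps the overall block error below $\e$, and the distortions for $(a,b)\notin\Lc(D)$ remain below $D(a,b)+\e$.

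The main obstacle I anticipate is the cut-set step: I need to verify carefully that $H(U^{(a)}\mid(U^{(c)})_{c\neq a})>0$, together with the zero-distortion reconstruction of $U^{(a)}$ at $b$ delivered by $\Cc_0$, really does force the mDMC to admit a strictly positive $a\to b$ achievable rate that can be carved out of the overall resource budget without disturbing $\Cc_0$'s performance on the other pairs. The positive-conditional-entropy hypothesis is precisely what rules out the degenerate case in which $U^{(a)}$ is a deterministic function of the remaining sources and so no real information flow from $a$ is needed at all.
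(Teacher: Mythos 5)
Your overall architecture coincides with the paper's: a base lossy code whose small expected distortion is converted, via a Fano-type bound, into a small conditional entropy $\H(U^{(a),L}\mid\Uh^{(a\to b),L})\leq Lf(\e)$, followed by a Slepian--Wolf refinement of asymptotically negligible rate that upgrades zero distortion to lossless reconstruction. The ``if'' direction is also handled the same way. However, there is a genuine gap at precisely the step you flag as the main obstacle, and the tool you propose for it would not close it. A ``cut-set-type argument'' is a converse device: it upper-bounds achievable rates, and it cannot by itself produce an \emph{achievable} positive rate $R_{ab}>0$ from $a$ to $b$ over a completely general mDMC, nor a ``reliable channel code'' to run on reserved raw channel uses. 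The existence of such a code is exactly what has to be constructed, and nothing in the hypotheses hands you a conventional point-to-point channel code for the mDMC.

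The paper closes this gap constructively: it does not reserve raw channel uses at all, but instead appends $N'_k$ extra \emph{sessions} of the very same base code, with the sources at all nodes $v\neq a$ frozen to a fixed typical value $u^{(-a),L}\in\Tc_\d^{(L)}$ chosen so that $\E[d_L(U^L,\Uh^L)\mid U^{(-a),L}=u^{(-a),L}]\leq \e/(1-\d)$. The induced map $p(\uh^L\mid u^L, U^{(-a),L}=u^{(-a),L})$ is then a bona fide discrete memoryless channel with super-letters of length $L$, and its capacity is lower-bounded by
\begin{align*}
C_0^{(k)} \;\geq\; H(U^L\mid U^{(-a),L}=u^{(-a),L}) - H(U^L\mid \Uh^L, U^{(-a),L}=u^{(-a),L}) \;\geq\; (1-\d)^2 L\,\H(U\mid U^{(-a)}) - Lf\!\left(\tfrac{\e}{1-\d}\right),
\end{align*}
which is strictly positive for small $\e,\d$ precisely because of the hypothesis $\H(U^{(a)}\mid(U^{(c)})_{c\neq a})>0$. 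The ratio $R_0^{(k)}/C_0^{(k)}$ of refinement rate to this capacity vanishes as $\e,\d\to 0$, which is what makes the rate loss negligible. Your proposal identifies the right hypothesis and the right place to use it, but without this (or an equivalent) explicit construction of the $a\to b$ channel, the transport of the bin indices is unproven; as written, the argument does not go through.
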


\begin{proof}[Proof of Theorem \ref{thm:lossless_zero_dist}]

For the forward result, fix a sequence of codes at rate $L/n\to \kappa$, distortion $\E[d_L(U^{(a),L},\Uh^{(a\to b),L})]\to D(a,b)$ for all $(a,b)\notin \Lc(D)$ and error probability  $\P(U^{(a),L}\neq \hat{U}^{(a\to b),L})\to 0$ for all $(a,b)\in\Lc(D)$. For each $(a,b)\in\Lc(D)$, the given sequence of codes satisfies 
\begin{align*}
\E&[d_L(U^{(a),L},\hat{U}^{(a\to b),L})] \nonumber\\
&=\E\left[d_L(U^{(a),L},\hat{U}^{(a\to b),L})\left|U^{(a),L}\neq \hat{U}^{(a\to b),L}\right.\right]\P(U^{(a),L}\neq \hat{U}^{(a\to b),L})\nonumber\\
&+\E\left[d_L(U^{(a),L},\hat{U}^{(a\to b),L})\left|U^{(a),L} =      \hat{U}^{(a\to b),L}\right.\right]\P(U^{(a),L}  =     \hat{U}^{(a\to b),L})\nonumber\\
&\leq d_{\max}\P(U^{(a),L}\neq \hat{U}^{(a\to b),L}).
\end{align*}
Since the given bound approaches 0 as $\P(U^{(a),L}\neq \hat{U}^{(a\to b),L})\to 0$, the sequence of codes  achieves zero-distortion reconstruction of source $a$ at node $b$, which is the the desired result.

To prove the converse, fix any $D\in\Dc(\kappa,\Nc)$ with $|\Lc(D)|>0$ and  any $\epsilon>0$. By the  definition of $\Dc(\kappa,\Nc)$, for any $\e>0$, there exists a  code with source blocklength  $L$ and channel blocklength $n\leq L/( \kappa -\e)$ such that 
\begin{align}
\E[d_L(U^{(a),L},\hat{U}^{(a\to b),L})]\leq D(a,b)+\e\label{eq:code_e}
\end{align}
for each $(a,b)\in\Vc^2$. Specifically, for any $(a,b)$ such that $D(a,b)=0$, 
\begin{align*}
\E[d_L(U^{(a),L},\hat{U}^{(a\to b),L})]\leq \e.
\end{align*}

We now prove that with an asymptotically negligible  increase in  number of channel uses $n$, node $a$ can send node $b$ sufficient information to improve node $b$'s reconstruction of node $a$'s data from a  zero-distortion reproduction to a lossless reconstruction. We further show that this change preserves the quality of all other reconstructions.  

The following argument builds a code of source blocklength $NL$ and channel blocklength $n(N+N')$, for some integer $N'$ to be defined shortly, from the given code of source blocklength $L$ and channel blocklength $n$.

Each node $a\in\Vc$ breaks its incoming source block of length  $NL$  into $N$ non-overlapping blocks of length $L$, given by
\[
U^{(a),L},U_{L+1}^{(a),2L},\ldots,U_{(N-1)L+1}^{(a),NL}.
\]
Each node then applies the  blocklength-$L$ code   $N$ times to independently code each of these blocks. In total, this requires $Nn$ channel uses. Independently decoding each $L$-block with the blocklength-$L$ decoder achieves, for each $a,b\in\Vc$,  a reconstruction of length $NL$ such that
\begin{align}
\E[d_L(U_{(\ell-1)L+1}^{(a),\ell L},\hat{U}_{(\ell-1)L+1}^{(a\to b),\ell L})]\leq D(a,b)+\e,\label{eq:code_e_N}
\end{align}
for each $\ell=1,2,\ldots,N$. 

For  $(a,b)\in\Lc(D)$ and each $\ell\in\{1,\ldots,N\}$, denote the input of node $a$ in session $\ell$ as \[U^{L}(\ell)\triangleq U^{(a),\ell L}_{(\ell-1)L+1},\] and the corresponding output at node $b$ as \[\hat{U}^L(\ell)\triangleq\hat{U}^{(a\to b),\ell L}_{(\ell-1)L+1}.\] By assumption,
\[
\E[d_L(U^L(\ell),\Uh^L(\ell))]\leq \e.
\]
Thus
\begin{align}
\e&\geq \E[d_L(U^L(\ell),\Uh^L(\ell))]\nonumber\\
& ={1\over L}\sum_{i=1}^L \E [d(U_i(\ell),\Uh_i(\ell))]\nonumber\\
&  \geq {1\over L} \sum_{i=1}^L d_{\min}\P(U_i(\ell)\neq \Uh_i(\ell)),\label{eq:bit_p_error}
\end{align}
where $d_{\min}\triangleq\min_{(u,\uh)\in\Uc\times\hat{\Uc}:u\neq\uh} d(u,\uh)$. Since  all alphabets are assumed to be finite, and $d(u,\uh)=0$ if and only if $u=\uh$, $d_{\min}>0$ by assumption. Therefore, 
\[
{1\over L}\sum_{i=1}^L \P(U_i(\ell)\neq \Uh_i(\ell)) \leq {\e \over d_{\min}}
\]
for all $\ell\in\{1,2,\ldots,N\}$.

\begin{figure}[t]
\begin{center}
\psfrag{U1}[l]{$U^L(1)$}
\psfrag{U2}[l]{$U^L(2)$}
\psfrag{U3}[l]{$U^L(N)$}
\psfrag{Uh1}[l]{$\Uh^L(1)$}
\psfrag{Uh2}[l]{$\Uh^L(2)$}
\psfrag{Uh3}[l]{$\Uh^L(N)$}
\psfrag{Layer 1}[l]{Session 1}
\psfrag{Layer 2}[l]{Session 2}
\psfrag{Layer N}[l]{Session N}
\includegraphics[width=7cm]{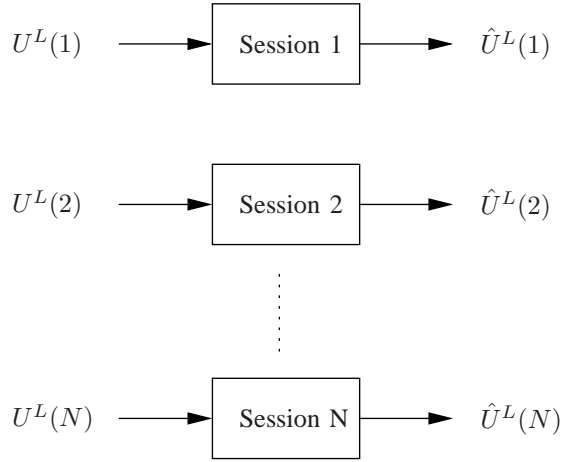}\caption{Source $U$ and its reconstruction at $L$ parallel sessions.}\label{fig:layers}
\end{center}
\end{figure}

\begin{figure}[h]
\begin{center}
\psfrag{Enc.}[l]{\footnotesize Encoder}
\psfrag{Dec.}[l]{\footnotesize Decoder}
\psfrag{U1}[l]{$U^{L}$}
\psfrag{Uh1}[l]{$\Uh^{L}$}
\psfrag{Ut1}[l]{$U^{L}$}
\psfrag{R0}[l]{$R_0$}
\includegraphics[width=8cm]{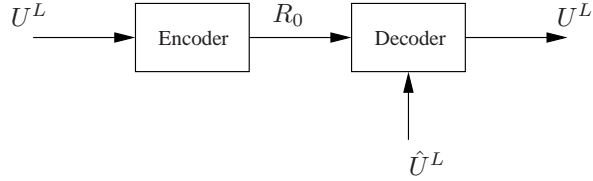}\caption{Slepian-Wolf coding for converting zero-distortion reconstruction into lossless reconstruction.}\label{fig:SW}
\end{center}
\end{figure}

Recall that all  sources and channels are memoryless by assumption and that the same code is used independently on each $L$-vector. Therefore, $\{U^{L}(\ell),\Uh^L(\ell)\}_{\ell=1}^N$  is an i.i.d. sequence. (See Fig.~\ref{fig:layers}.)  Our goal in the argument that follows is to losslessly describe $U^L(1),\ldots,U^L(N)$ to a decoder that knows $\Uh^L(1),\ldots,\Uh^L(N)$. We treat this as a problem of lossless source coding with receiver side information, as shown in Fig.~\ref{fig:SW}. From \cite{SlepianW:73},  rate $R_0^{(k)}=H(U^L|\Uh^L)$ suffices for losslessly reconstructing $U^L$ at a receiver that knows $\hat{U}^L$. Here lossless coding means that  the reconstruction  $\Ut^{LN}$ at the receiver has an error probability    $\P(U^{LN}\neq \tilde{U}^{LN})$ that can be made arbitrarily small, which is precisely the criterion needed for our proof.  Therefore, for any $\e<d_{\min}/2$, using Fano's inequality \cite{cover}, Jensen's inequality, and the concavity of the entropy function, we have
\begin{align}
R_0^{(k)}&=H(U^L|\Uh^L)=\sum_{i=1}^L H(U_i|U^{i-1},\Uh^L)\nonumber\\
&\leq \sum_{i=1}^L H(U_i|\Uh_i)\nonumber\\
&\leq \sum_{i=1}^L H(U_i,\ind_{U_i=\Uh_i}|\Uh_i)\nonumber\\
%&\leq \sum_{i=1}^L[H(\ind_{U_i=\Uh_i}|\Uh_i)+ H(U_i|\Uh_i,\ind_{U_i=\Uh_i})]\nonumber\\
%&\leq \sum_{i=1}^L[H(\ind_{U_i=\Uh_i})+ |\Uc|\P(U_i\neq\Uh_i)]\nonumber\\
&\leq \sum_{i=1}^L[h(\P(U_i\neq \Uh_i))+\log|\Uc|\P(U_i\neq\Uh_i)]\nonumber\\
&\leq L h\left({1\over L}\sum_{i=1}^L \P(U_i\neq \Uh_i)\right)+ \log|\Uc|\sum_{i=1}^L \P(U_i\neq\Uh_i)\nonumber\\
&\leq L\left(h\left({\e\over d_{\min}}\right)+{\log|\Uc|\e\over d_{\min}}\right)\nonumber\\
&\triangleq Lf(\e),\label{eq:upper_bd_on_cond_entropy}
\end{align}
 where for any $0\leq p\leq 1$, $h(p)=-p\log p-(1-p)\log(1-p)$, and $f(\e)\triangleq h({\e\over d_{\min}})+{\log|\Uc|\e\over d_{\min}}$. Note that $f(\e)\to 0 $ as $\e\to 0$. 

For each $(a,b)\in\Lc(D)$, we send the rate-$R_0^{(k)}$ description of $U^L$ from node $a$ to node $b$ by treating the random mapping from $U^L$ to $\Uh^L$ that results from applying the given code across the given network as a noisy channel. Specifically, we order the source-receiver pairs $(a,b)\in\Lc(D)$ lexicographically and send the description for  the $k$-th pair $(a,b)$ using  $N'_k$ dummy source vectors $U^L(N+\sum_{k'=1}^{k-1}N'_{k'}+1),\ldots,U^L(N+\sum_{k'=1}^{k-1}N'_{k'}+N'_k)$, thereby creating $N'_k$ uses of a channel $p(\uh^L|u^L)$ through which we can reliably transmit the lossless description of $U^L(1),\ldots,U^L(N)$ for $(a,b)$ to the decoder. The decoder's distortion-$\e$ reconstructions  $\Uh^L(1),\ldots,\Uh^L(N)$ of source vectors $U^L(1),\ldots,U^L(N)$ are treated as side information known only by the decoder.

The following discussion describes the approach precisely and investigates its performance. The code used to losslessly describe $U^L(1),\ldots,U^L(N)$ from node $a$ to the node $b$  employs fixed source values $U^{(v),L}(N+\sum_{k'=1}^{k-1}N'_{k'}+1)=\ldots=U^{(v),L}(N+\sum_{k'=1}^{k-1}N'_{k'}+N'_k)=u^{(v),L}$ for all  nodes $v\in\Vc\backslash a$ in the network. The value transmitted by each node $v\in\Vc\backslash \{a\}$  is chosen as follows. 

Since distortion in non-negative by assumption,
\begin{align*}
\e &\geq \E[d_L(U^L,\Uh^L)]\nonumber\\
&\geq \E[d_L(U^L,\Uh^L)|U^{(-a),L}\in\Tc^{(L)}_{\d}]\P(U^{(-a),L}\in\Tc^{(L)}_{\d}),
\end{align*}
where $U^{(-a),L}\triangleq(U^{(v),L})_{v\in\Vc \backslash a}$.
For any $\d>0$ and all  $L$ large enough, $\P(U^{(-a),L}\in\Tc^{(L)}_{\d}) >1-\d$, which implies that
\begin{align*}
\E[d_L(U^L,\Uh^L)|U^{(-a),L}\in\Tc^{(L)}_{\d}]\leq {\e\over 1-\d}.
\end{align*} 
Hence, there exists $u^{(-a),L}\in\Tc^{(L)}_{\d}$ such that
\begin{align}
\E[d_L(U^L,\Uh^L)|U^{(-a),L}=u^{(-a),L}]\leq {\e\over 1-\d}.\label{eq:fixed_blocks}
\end{align}

Fix any such $u^{(-a),L}$.  To bound the capacity of the resulting channel, we first bound the conditional entropy of $U^L$ given $\Uh^L$, when $U^{(-a),L}=u^{(-a),L}$. Here, following steps similar to those in \eqref{eq:bit_p_error} and \eqref{eq:upper_bd_on_cond_entropy}, but here conditioning on  $U^{(-a),L}=u^{(-a),L}$,  we conclude that
\begin{align*}
H(U^L|\Uh^L,U^{(-a),L}=u^{(-a),L})\leq Lf({\e\over 1-\d}).
\end{align*}
To finish our capacity calculation, we next bound the entropy of $U^L$ given $U^{(-a),L}=u^{(-a),L}$. Since $u^{(-a),L}\in\Tc^{(L)}_{\d}$, for any $u^L\in\Tc_{\d}^{(L)}(U|u^{(-a),L})$,  
\begin{align*}
p(u^L|u^{(-a),L})\leq 2^{-(1-\d)LH(U|U^{(-a)})}
\end{align*}
by \cite{ElGamalK_book}. Hence, for $L$ large enough,
\begin{align*}
&H(U^L|U^{(-a),L}=u^{(-a),L})\nonumber\\
&=\sum\limits_{u^L}-p(u^L|u^{(-a),L}) \log p(u^L|u^{(-a),L})\nonumber\\
&\geq\sum\limits_{u^L\in\Tc^{(L)}_{\d}(U|u^{(-a),L})}\hspace{-0.8cm}-p(u^L|u^{(-a),L}) \log p(u^L|u^{(-a),L})\nonumber\\
& \geq(1-\d)LH(U|U^{(-a)})\P(U^L\in\Tc_{\d}^{(L)}|U^{(-a),L}=u^{(-a),L})\nonumber\\
&\geq(1-\d)^2LH(U|U^{(-a)}),
\end{align*}
where the last line follows since, for $L$ large enough, $\P(U^L\in\Tc_{\d}^{(L)}|U^{(-a),L}=u^{(-a),L})>1-\d$.

Hence, fixing $U^{(-a),L}=u^{(-a),L}$ yields a channel $p(\uh^L|u^L,U^{(-a),L}=u^{(-a),L})$, with capacity 
\begin{align}
C_0^{(k)} \geq (1-\d)^2LH(U|U^{(-a)})-Lf({\e\over 1-\d}).\label{eq:rate0}
\end{align} 

Thus the rate required to losslessly describe $U^{LN}$ to a decoder with  reproduction $\Uh^{LN}$ of $U^{LN}$ is at most $R_0^{(k)}N$, and the capacity of the channel over which we wish to describe $U^{LN}$ is at least $C_0^{(k)}$ bits per $L$ network uses. We can therefore achieve the desired lossless description  provided that $N'_kC_0^{(k)}>NR_0^{(k)}$, giving $N'_k>NR_0^{(k)}/C_0^{(k)}$. Thus the total number of sessions required to send first the lossy description and then the lossless incremental description is 
\[
N+N'=N+\sum_{k=1}^{|\Lc(D)|}N'_k>N\Big(1+\sum_{k=1}^{|\Lc(D)|} R_0^{(k)}/C_0^{(k)}\Big).
\]
 Here 
\begin{align*}
{R_0^{(k)}\over C_0^{(k)}}&\leq {Lf(\e)\over (1-\d)^2LH(U|U^{(-a)})Lf({\e\over 1-\d}) }\nonumber\\
&={f(\e)\over (1-\d)^2H(U|U^{(-a)})-f({\e\over 1-\d})},
\end{align*}
which approaches zero as $\e$ approaches zero and $\d$ approaches zero.  Repeating this process for every $(a,b)\in\Lc(D)$, the resulting coding rate  can be bounded as
\[
{\kappa\over 1+\sum\limits_{k=1}^{|\Lc(D)|}{R_0^{(k)}/ C_0^{(k)}}} \leq \kappa' \leq\kappa.
\]
Since  $|\Lc(D)|<|\Vc|^2$ is a finite number, the resulting coding rate $\kappa'$, after adding these extra sessions, still approaches to $\kappa$,  as $\e$ and $\d$ corresponding to each $(a,b)\in\Lc(D)$ converge to zero.

\end{proof}

Combining Theorem \ref{thm:lossless_zero_dist}, Theorem \ref{thm:main} and the result proved by W.~Gu in \cite{Gu_thesis} proves the separation of source-network coding and channel coding in a wireline network with dependent sources with lossy or lossless reconstructions. In particular, this result partially extends the separation result of \cite{KoetterE:09a} to the case where the sources are dependent. The extension is partial since in \cite{KoetterE:09a} the channels can be discrete or continuous, but here we have only considered discrete channels. In the next section, we consider the case of AWGN channels.

%******************************************************************************
%******************************************************************************

\section{Continuous channels}\label{sec:awgn}

While the capacity results of \cite{KoetterE:11} are proven for general (discrete or continuous) alphabets, the sources and channels considered in Theorems \ref{thm:stack} and \ref{thm:main} were all assumed to have finite alphabets. In this section, we prove that our results also hold for AWGN channels. In order to prove this we use the discretization method introduced  in \cite{McEliece:77}.

Consider a wireline network $\Nc$  with an AWGN channel from node $a$ to node $b$. Let the input and output of this channel be $X$ and $Y=X+Z$, respectively. The coding on $\Nc$ is performed similar to the coding described in Section \ref{sec:setup}.  Assume  input power constraint $P$ and noise power $N$. To impose the power constraint, for a code with channel blocklength $n$, we require
\[
\E[X_t^2]\leq P,
\]
for  $t=1,\ldots,n$. Similarly, in the $N$-fold stacked version of $\Nc$, we require
\[
{1\over N}\sum_{\ell=1}^N\E[X_t^2(\ell)]\leq P,
\]
for  $t=1,\ldots,n$. 

Let $\Nc_b$ be a wireline network that is identical to network $\Nc$ except that   the channel from $a$ to $b$ is replaced by a bit pipe  of capacity $C=0.5\log(1+P/N)$. Theorem \ref{thm:awgn2} shows, as in the case of discrete-valued channels, that this change does not affect the set of achievable distortions, thereby generalizing Theorem \ref{thm:main}. 

\begin{remark}
Given a Gaussian channel with input power constraint $P$, usually, a code of blocklength $n$ and rate $R$  is  defined as a code with $2^{nR}$ codewords $(x^n(m))_{m=1}^{2^{nR}}$, such that $\sum_{t=1}^nx_t^2(m)\leq np$, for every $m=1,\ldots,2^{nR}$ \cite{cover,ElGamalK_book}. However, instead of an average power constraint on each codeword,  we can put an average power constraint on each transmitted symbol and require that $\E[x_t^2(M)]\leq P$, for $t=1,\ldots,n$ \cite{ShamaiB:95}.  Note that for a given code, the randomness in $\E[x_t^2(M)]\leq P$ is only due to the message $M$.  This alternative definition does not affect the capacity of the channel from $C=0.5\log(1+P/N)$. In this paper, we consider the latter definition because of some technical issues in the proof of the main result.

The equivalence of the  capacities corresponding to the two definitions can be shown as follows. The converse of the  capacity theorem stated in \cite{cover} applies to the case symbol-by-symbol power constraint as well.   For the achievability, consider the code construction presented in \cite{cover} with the same encoding and decoding strategy. For each $t=1,\ldots,n$, $\P(\E[x_t^2(M)]>P)=\P(2^{-nR}\sum_{m=1}^{2^{nR}}x_t^2(m)>P)\leq 2^{-2^{nR}\d(\e)}$, where $\d(\e)\to 0$ as $\e\to 0$. Hence, by the union bound $\P(\E[x_t^2(M)]>P,\;{\rm for}\;{\rm some}\;t) \leq n2^{-2^{nR}\d(\e)}$. This shows that   there exist  a  sequence of codes that both satisfy the power constraint on each coordinate and also have arbitrary  small probability of error. (The analysis of the probability of error presented in \cite{cover} applies here too.)
\end{remark}

%In this section, for a technical reason which is made clear in the proof, we restrict the joint source-channel codes to the set of admissible codes which are defined as follows. 

%\begin{definition}[Admissible codes]
%Consider a joint source-channel code of  source blocklength $L$ and channel blocklength $n$ for network $\Nc$. For each $t=1,2,\ldots,n$, define
%\[
%\d_t^{(a\to b)}(x_{t},y_{t})\triangleq\E[d_L(U^{(a),L},\Uh^{(a\to b),L})|(X_t,Y_t)=(x_t,y_t)].
%\]
%The code is called \emph{admissible}, if for every fixed $x_{t}\in\mathds{R}$,  and $(a,b)\in\Vc^2$, $\d_t^{(a\to b)}(x_{t},y_{t})$  is a non-decreasing function of $|y_{t}-x_{t}|$. 
%\end{definition}
%In other words, a code is admissible if its performance does not improve, \ie stays the same or deteriorates, as the absolute value of the Gaussian noise increases. Let $\Dc_a(\kappa,\Nc)$ denote the set of  distortion matrices that are achievable over network $\Nc$ using  admissible codes. Here $\Dc_a(\kappa,\Nc)\subseteq\Dc(\kappa,\Nc)$; whether the two regions are identical remains an open problem for future work. 
%

\begin{theorem}\label{thm:awgn2}
For a wireline network consisting of  discrete or AWGN point-to-point channels, 
\[
 \Dc(\kappa,\Nc_b)= \Dc(\kappa,\Nc).
\]
\end{theorem}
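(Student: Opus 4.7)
The plan is to prove both inclusions separately, with McEliece's discretization handling the continuous-alphabet obstruction in the harder direction.

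For $\Dc(\kappa, \Nc_b) \subseteq \Dc(\kappa, \Nc)$, I would repeat the argument of Part~(i) of Theorem~\ref{thm:main}: given any code on a stacked copy of $\Nc_b$, emulate each bit pipe of capacity $C$ by an AWGN channel code at rate $R < C$ satisfying the per-symbol power constraint $\E[X_t^2]\leq P$. Shannon's AWGN coding theorem, in the per-symbol-constraint form discussed in the remark preceding the theorem, supplies such codes with vanishing maximum error probability, and the union-bound argument from Part~(i) of Theorem~\ref{thm:main} goes through verbatim once ``DMC channel code'' is replaced by ``Gaussian channel code''. Indeed, the remark immediately following Theorem~\ref{thm:main} already asserts that this direction extends to continuous point-to-point channels.

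The other direction, $\Dc(\kappa, \Nc) \subseteq \Dc(\kappa, \Nc_b)$, is where discretization enters. Fix any $D \in \Dc(\kappa,\Nc)$ and any $\epsilon>0$, and fix a code on $\Nc$ of blocklengths $(L,n)$ achieving distortion at most $D + \frac{\epsilon}{2}\mathbf{1}$. For a parameter $\eta>0$ I would pick a finite input grid $\hat{\Xc}_\eta \subset [-\sqrt{P+\eta},\sqrt{P+\eta}]$ and a finite partition $\{B_j\}_{j=1}^{K_\eta}$ of $\Real$; these induce a DMC $\tilde{p}(j \mid x)$ on $\hat{\Xc}_\eta \times \{1,\ldots,K_\eta\}$ whose capacity $C_\eta$ can, by McEliece's discretization theorem, be made arbitrarily close to $C$ by refining the grid and partition. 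Let $\Nc^{(\eta)}$ denote the network obtained from $\Nc$ by swapping the AWGN channel for this DMC. I would then transport the given code to $\Nc^{(\eta)}$ by having the AWGN-channel encoder project its analog output onto $\hat{\Xc}_\eta$ via nearest-neighbor quantization, and having the corresponding decoder act only on the cell index of its received value. A continuity argument should show that for sufficiently fine discretization the per-letter expected distortion at every receiver increases by at most $\frac{\epsilon}{2}$. Given this, Theorem~\ref{thm:main} converts the code on $\Nc^{(\eta)}$ into one on the bit-pipe network $\Nc^{(\eta)}_b$ (which replaces the DMC by a bit pipe of capacity $C_\eta < C$) with the same distortion, and since a bit pipe of capacity $C$ trivially simulates one of capacity $C_\eta$, the distortion $D + \epsilon \mathbf{1}$ is achievable on $\Nc_b$. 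Sending $\epsilon \to 0$ then yields $D \in \Dc(\kappa,\Nc_b)$.

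The hard part will be the distortion-continuity step under discretization. Since the channel inputs at time $t$ are computed from the (now quantized) channel outputs at times $1,\ldots,t-1$, a naive total-variation bound between the joint distributions of $(X^n,Y^n)$ under the AWGN and under the discretized DMC accumulates through the $n$ feedback rounds, so the discretization parameters must be chosen depending on $n$, $L$, and the sensitivity of the encoders, not on $\epsilon$ alone. Two secondary technicalities also need care: the quantized channel inputs must satisfy the per-symbol power constraint $P$ (hence widening the input support by $\eta$ and targeting $D + \epsilon \mathbf{1}$ rather than $D$ exactly), and the original code's possibly unbounded channel inputs must be truncated at a level where the discarded tail contributes negligibly to the distortion, so that McEliece's discretization, which presumes bounded input support, applies.
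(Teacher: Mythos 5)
Your overall architecture is exactly the paper's: the easy inclusion via the continuous-channel remark after Theorem~\ref{thm:main}, and the hard inclusion via McEliece-style discretization of the AWGN channel into a DMC, followed by an application of Theorem~\ref{thm:main} and a data-processing bound showing the discretized capacity does not exceed $C$. However, the step you yourself flag as ``the hard part'' --- that fine enough discretization perturbs every receiver's expected distortion by at most $\epsilon/2$ --- is precisely the content of Lemma~\ref{lemma:cont} in the paper, and you have not supplied an argument for it; you have only (correctly) observed that a naive total-variation bound accumulates through the $n$ feedback rounds. The paper's resolution has three ingredients you are missing. First, it conditions on the channel history $(X^{t-1},Y^{t-1})$ and observes that, because the encoders are the \emph{same deterministic maps} in both networks, the conditional law of $X_t$ given the (quantized or unquantized) history is identical in the two networks (equation \eqref{eq:F}); this is what prevents the error from compounding multiplicatively and reduces the problem to a single-time-step comparison at each $t$. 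Second, the conditional distortion function $\delta^{(t)}(x^t,y^t)=\E[d_L(U^L,\hat U^L)\mid (X^t,Y^t)=(x^t,y^t)]$ is only measurable, not continuous --- the encoders are arbitrary measurable maps --- so your phrase ``a continuity argument'' begs the question; the paper invokes Lusin's theorem to restrict to a set of probability $1-\e_1$ on which $\delta^{(t)}$ is continuous, and then uses bounded/dominated convergence under the iterated limits $\lim_{k_t}\lim_{j_t}\cdots\lim_{k_1}\lim_{j_1}$, taken in a specific nested order, with an induction over $t$. Without Lusin (or some surrogate), the pointwise convergence of the quantized outputs does not transfer to convergence of the expected distortion.

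A second, smaller issue: your nearest-neighbor quantizer onto a grid in $[-\sqrt{P+\eta},\sqrt{P+\eta}]$ does not guarantee that the quantized input still satisfies $\E[([X]_j)^2]\le P$, and if it does not, the induced DMC's capacity can exceed $C=\tfrac12\log(1+P/N)$, breaking the step where the capacity-$C$ bit pipe must simulate the capacity-$C_\eta$ pipe. The paper sidesteps this by defining $Q[i]$ to round \emph{toward zero} ($|[x]_i|\le |x|$), so that $\E[([X]_{j_t})^2]\le\E[X^2]\le P$ automatically, and then $I([X]_{j_t};[Y_{j_t}]_{k_t})\le I([X]_{j_t};Y_{j_t})\le \tfrac12\log(1+P/N)$ by data processing and the maximum-entropy bound. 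Your fix of ``targeting $D+\epsilon\mathbf{1}$'' addresses the power constraint's feasibility but not the capacity comparison, which is the step that actually matters for the inclusion into $\Dc(\kappa,\Nc_b)$.
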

\begin{proof}[Proof of Theorem \ref{thm:awgn2}]
The second inclusion is immediate since the first part of the proof of Theorem \ref{thm:main} applies equally well for continuous channels case. To prove the first inclusion, we employ the discretization method used in \cite{McEliece:77}. Let network $\Nc^{(\jv,\kv)}$, with $\jv=(j_1,j_2,\ldots,j_n)$ and $\kv=(k_1,k_2,\ldots,k_n)$,  denote the network derived from network $\Nc$ by replacing the AWGN channel from $a$ to $b$ by the structure shown in  Fig.~\ref{fig:awgn_g}. The given channel relies on a pair of quantizers  $Q[j]$ and $Q[k]$ parametrized by indices $j$ and $k$. We allow the quantizer parameters to vary with $t$, setting $j=j_t$ and $k=k_t$ for each time $t\in\{1,2,\ldots,n\}$.  The quantizer $Q[i]$ is defined as follows. For $i\in\{1,2,\ldots\}$, let $\Delta=1/\sqrt{i}$, and define the quantizer $Q[i]$ with quantization levels $\Lc_i=\{-i\Delta,-(i-1)\Delta,\ldots,-\Delta,0,\Delta,\ldots,(i-1)\Delta,i\Delta\}$. For any $x\in\mathds{R}$, $Q[i]$ maps $x$ to $[x]_i$, which is the closest number to $x$ in $\Lc_i$ such that $|[x]_i|\leq x$. Note that by this definition,  $\E[[X]_i^2]\leq \E[X^2]$ for any random variable $X$.

Lemma \ref{lemma:cont} in Appendix B shows that as $j$ and $k$ increase, the set of achievable distortions on  $\Nc^{(\jv,\kv)}$ approaches the set of achievable distortions on the original network. More precisely,
\begin{align}
\Dc_a(\kappa,\Nc)\subseteq \overline{\limsup_{\jv,\kv}\Dc(\kappa,\Nc^{(\jv,\kv)})},\label{eq:lemma1}
\end{align}
where 
\begin{align*}
\limsup_{\jv,\kv}\Ac_{\jv,\kv}\triangleq \bigcap_{\jv_0,\kv_0} \bigcup_{\substack{\jv\geq \jv_0\\ \kv\geq \kv_0}}\Ac_{\jv,\kv},
\end{align*}
and $\overline{\Ac}$ denotes the closure of the set $\Ac$.

We next show that
\begin{align}
\Dc(\kappa,\Nc^{(\jv,\kv)}) \subseteq \Dc(\kappa,\Nc_b).\label{eq:1}
\end{align}

This is sufficient to obtain the desired result since \eqref{eq:lemma1} and \eqref{eq:1} together imply $\Dc_a(\kappa,\Nc)\subseteq\Dc(\kappa,\Nc_b)$ by the closure in the definition of $ \Dc(\kappa,\Nc_b)$.

To prove that $\Dc(\kappa,\Nc^{(\jv,\kv)}) \subseteq \Dc(\kappa,\Nc_b)$, note that, at each time $t$,  the structure shown in Fig.~\ref{fig:awgn_g} behaves like a DMC with input $[X]_{j_t}$, power constraint  $\E[([X]_{j_t})^2]\leq P$ and output $[Y_{j_t}]_{k_t}$. Hence, by straightforward extension of the proof of Theorem \ref{thm:main}, 
\[
\Dc(\kappa,\Nc^{(\jv,\kv)})\subseteq\Dc(\kappa,\Nc_b^{(\jv,\kv)}),
\]
where $\Nc_b^{(\jv,\kv)}$ is identical to $\Nc^{(\jv,\kv)}$ except that the  channel from $a$ to $b$ is replaced by a bit pipe of capacity $C_{\jv,\kv}$ equal to the maximum capacity of the $n$ DMCs. Here
\begin{align*}
C_{\jv,\kv}\triangleq\max_{1\leq t\leq n}\max_{\substack{[X]_{j_t}\sim p_X\\ p_X:\E[X^2]\leq P}}I([X]_{j_t};[Y_{j_t}]_{k_t}).
\end{align*} 

By the data processing inequality \cite{cover},
\begin{align*}
I([X]_{j_t};[Y_{j_t}]_{k_t})&\leq I([X]_{j_t};Y_{j_t})\nonumber\\
&=h(Y_{j_t})-h(Z).
\end{align*}
On  the other hand, by the construction of the quantizers, 
\begin{align*}
\E[Y_{j_t}^2]&=\E[[X]_{j_t}^2]+N\nonumber\\
&\leq \E[X^2]+N.
\end{align*}
Hence, 
\begin{align*}
h(Y_{j_t})\leq 0.5\log(2\pi e (P+N)),
\end{align*}
and as a result
\begin{align*}
I([X]_{j_t};[Y_{j_t}]_{k_t})&\leq C.
\end{align*}
Therefore,  $\Dc(\kappa,\Nc^{(\jv,\kv)}) \subseteq \Dc_b$.

\begin{figure}[h]
\begin{center}
\psfrag{Q1}[l]{$Q[j]$}
\psfrag{Q2}[l]{$Q[k]$}
\psfrag{Z}[l]{$Z$}
\psfrag{X}[l]{$X$}
\psfrag{Y}[l]{$Y_j$}
\psfrag{Xq}[l]{$[X]_j$}
\psfrag{Yq}[l]{$[Y_j]_k$}
\includegraphics[width=7.5cm]{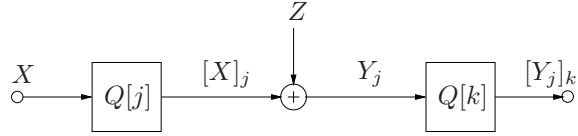}\caption{Quantizing the input and output alphabets of an AWGN}\label{fig:awgn_g}
\end{center}
\end{figure}

\end{proof}

%%******************************************************************************
%%******************************************************************************
%%
%\section{Sources with memory}\label{sec:memory}
%\input{memory}
%
%******************************************************************************
%******************************************************************************

\section{Conclusions}\label{sec:conclusion}

In this paper we proved the separation of source-network and channel coding in  general wireline networks of independent  discrete  point-to-point channels with dependent sources and arbitrary lossy or lossless reconstruction demands. We also proved that the  result continues to hold when one or more channels is an AWGN channel.
%------------------Appendix II -----------------
\renewcommand{\theequation}{A-\arabic{equation}}
% redefine the command that creates the equation no.
\setcounter{equation}{0}  % reset counter
\section*{Appendix A: Proof of part ii of Theorem \ref{thm:stack}}

 Let $D\in {\rm int} (\Dc_s(\kappa,\underline{\Nc}))$. Then for any $\e>0$, there exist integers $N$, $n$, and $L$ such that $L/n \geq \kappa-\e$ and there exists a  blocklength-$n$ coding scheme for $L$ source symbols on $N$-fold  stacked network $\underline{\Nc}$ that achieves 
    \[
    \E \left[d^{(a\to b)}_{NL}(U^{(a),NL},\hat{U}^{(a\to b),NL})\right] \leq D(a,b) + \e,
    \]
 for all $a,b\in\Vc$. The same coding scheme can be used in a single-layer network as follows. Consider a single layer network where each node $a$ observes a length-$NL$ block of source symbols $U^{(a),NL}$ and describes the block in the next $Nn$ time steps. Given source blocklength $L'=NL$ and channel block length $n'=Nn$, the code has rate $\kappa=L'/N'=L/N$. At each time $t\in\{1,\ldots,N\}$, each node $a\in\Vc$ sends, over its outgoing edges, what it would have  sent at time 1 in layer $t$ of $\underline{\Nc}$, \ie $\Xv^{(a)}_1(t)$, and collects, over its incoming edges, what it would have collected in layer $t$ of $\underline{\Nc}$, \ie  $\Yv^{(a)}_1(t)$. At times $t\in\{N+1,\ldots,2N\}$, each node $a$ sends $\Xb_2^{(a)}(t-N)$ and collects $\Yb_2^{(a)}(t-N)$.  Here calculating $\Xb_2^{(a)}$ is possible due to the prior collection of $\Yb_2^{(a)}$. 
 %(Note that in the first $N$ time steps, node $a$'s output is only a function of its own source, not the channels' outputs. It only collects the channel outputs in order to use them during the next $N$ time steps.) 
The same strategy is used in the next $n-2$ time intervals,  in interval $t$ transmitting $\Xv_t^{(a)}$  for $t\in\{3,\ldots,n\}$ and collecting $\Yv_t^{(a)}$ for uses in future time intervals. Using this strategy, at the end of $nN$ channel uses, each node's observation has exactly the same distribution as the collection of observations of its $N$ copies in the stacked networks. Therefore, applying the  decoding rules  results in the same distortion. Hence, $D\in\Dc(\kappa,\Nc)$.

%------------------Appendix II -----------------
\renewcommand{\theequation}{B-\arabic{equation}}
% redefine the command that creates the equation no.
\setcounter{equation}{0}  % reset counter

\section*{Appendix B: Lemma \ref{lemma:cont}}
\begin{lemma}\label{lemma:cont}
For any $\kappa>0$,
\begin{align}
\Dc(\kappa,\Nc)\subseteq \overline{\limsup_{\jv,\kv}\Dc(\kappa,\Nc^{(\jv,\kv)})},
\end{align}
where $\overline{\Ac}$ denotes the closure of set $\Ac$.
\end{lemma}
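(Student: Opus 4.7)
The plan is to take any code on $\Nc$ that achieves a distortion close to $D$, truncate its AWGN--edge channel inputs to a bounded range, and argue that applying this truncated code on $\Nc^{(\jv,\kv)}$ yields essentially the same distortions as $\jv,\kv\to\infty$. Fix $D\in\Dc(\kappa,\Nc)$ and $\e>0$; by definition of $\Dc(\kappa,\Nc)$ there is a code of source blocklength $L$ and channel blocklength $n$ with $L/n\ge\kappa-\e$ whose expected distortions are at most $D(a,b)+\e/3$ componentwise. Let $X_t$ denote the AWGN-edge input at time $t$, so that $\E[X_t^2]\le P$, and replace $X_t$ by the truncation $\bar X_t=X_t\ind_{|X_t|\le M}$. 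This preserves the power constraint since $|\bar X_t|\le|X_t|$, and Markov's inequality gives $\P(\bar X_t\ne X_t)\le P/M^2$. A union bound over $t=1,\ldots,n$ together with the boundedness $d\le d_{\max}$ then shows that, for $M$ large enough, the truncated code still achieves distortions at most $D(a,b)+2\e/3$ on $\Nc$.

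Next I would apply the truncated code on $\Nc^{(\jv,\kv)}$ with $j_t\ge M^2$ and $k_t$ large, coupling the two networks by sharing the source realizations and the Gaussian noises $Z_1,\ldots,Z_n$ on the AWGN edge. Write $Y_t$ and $\tilde Y_t$ for the corresponding outputs. Because $|\bar X_t|\le M\le\sqrt{j_t}$, the input quantizer gives $|[\bar X_t]_{j_t}-\bar X_t|\le 1/\sqrt{j_t}$, and on the event $\{|\bar X_t+Z_t|\le\sqrt{k_t}-1\}$ the output quantizer gives $|\tilde Y_t-Y_t|\le 1/\sqrt{j_t}+1/\sqrt{k_t}$; the complementary event has probability at most a Gaussian tail in $\sqrt{k_t}-M-1$, which vanishes as $k_t\to\infty$. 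Thus, restricted to the truncated input range, the AWGN-edge transition kernel of $\Nc^{(\jv,\kv)}$ converges to the AWGN kernel, in the sense that the two outputs can be coupled to agree up to a vanishing error on a set of probability approaching one.

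The closeness has to be propagated through the rest of the code by induction on $t$. Once $|\tilde Y_s-Y_s|$ is small with high probability for all $s<t$, one needs the encoder $f_t^{(a)}$ to turn small input perturbations into small output perturbations, and similarly for the final decoders that produce $\Uh$. Since these are only required to be measurable, I would first approximate each of them by a continuous, indeed Lipschitz, function via Lusin's theorem, at a cost of at most $d_{\max}\d$ in expected distortion per node per time step. Choosing $\d$ small relative to $\e/(3n|\Vc|^2)$, iterating over all $n$ time steps, and then letting $\jv,\kv\to\infty$, the induced reproductions on $\Nc^{(\jv,\kv)}$ agree with those on $\Nc$ with probability tending to one; bounded convergence then gives expected distortions within $\e$ of $D$, putting $D$ in $\overline{\limsup_{\jv,\kv}\Dc(\kappa,\Nc^{(\jv,\kv)})}$ as required.

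The main obstacle is the inductive step: the encoders and decoders in a generic achievability code are only measurable, so pointwise closeness of channel outputs does not automatically imply closeness of the subsequent channel inputs or of the final reproductions. The Lusin-type approximation must be organized carefully across all $n$ time steps and all nodes so that the accumulated approximation errors, the input and output quantization errors of order $1/\sqrt{j_t}+1/\sqrt{k_t}$, and the Gaussian-tail error from clipping at $\sqrt{k_t}$ can all be controlled simultaneously. In particular, the limits must be taken in the right order, first $M\to\infty$ to tame truncation, then $\d\to 0$ to make the encoders essentially continuous, and finally $\jv,\kv\to\infty$ along the cofinal sequences required by the $\limsup$.
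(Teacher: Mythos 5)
Your overall strategy --- couple the two networks pathwise by sharing the sources and the Gaussian noises, show $|\tilde Y_t - Y_t|$ is small with high probability, and then push this closeness through the encoders and decoders --- is genuinely different from the paper's, and the step you yourself flag as ``the main obstacle'' is in fact an unresolved gap, not a technicality. Lusin's theorem gives you, for each encoder $X_t^{(a)}$, a continuous function agreeing with it outside a set of measure $\delta$ \emph{with respect to the law induced by the original code on $\Nc$}. Two things then go wrong. First, the perturbed input $(\tilde Y^{t-1}, U^L)$ in $\Nc^{(\jv,\kv)}$ need not assign small probability to that exceptional set: a set of small Lebesgue/probability measure can be dense, so an arbitrarily small pointwise perturbation of the argument can land in it with probability close to one, and ``$d_{\max}\delta$ per step'' does not follow. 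Second, once you replace $X_1^{(a)}$ by its continuous surrogate you have changed the law of everything downstream, so the exceptional sets you would choose for $X_2^{(a)}, X_3^{(a)},\dots$ and for the decoders are measured against the wrong distribution; the induction you describe never closes. (Minor additional issues: Lusin gives continuity, not Lipschitz continuity; and the truncation $\bar X_t=X_t\ind_{|X_t|\le M}$ is unnecessary, since the input quantizer $Q[j_t]$ is part of the channel in $\Nc^{(\jv,\kv)}$ and pointwise convergence $[x]_{j}\to x$ for each fixed $x$ already suffices.)

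The paper avoids this entire difficulty by never tracking the perturbed trajectory through the code. The observation that does the work is: conditioned on the \emph{exact values} $(X_1,Y_1)=(x_1,y_1)$ of the AWGN edge at time $1$, the networks $\Nc$ and $\Nc^{(j_1,k_1)}$ run the identical deterministic code on identically distributed residual randomness, so the conditional expected distortion is the \emph{same} measurable function $\d^{(1)}(x_1,y_1)$ in both. Hence the modified network's distortion is exactly $\E[\d^{(1)}(X_1,\tilde Y_1)]$ with $\tilde Y_1=[[X_1]_{j_1}+Z_1]_{k_1}$, and the whole problem collapses to showing $\E[\d^{(1)}(X_1,\tilde Y_1)]\to\E[\d^{(1)}(X_1,Y_1)]$ for a single bounded measurable function, using $\tilde Y_1\to Y_1$ a.e., one application of Lusin, and bounded convergence; induction over $t$ repeats the same conditioning with $\d^{(t)}(x^t,y^t)$, using that the conditional law of $X_t$ given $(X^{t-1},Y^{t-1})$ is unchanged by the quantization. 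You need this conditioning reduction (or an equivalent weak-convergence argument on the joint law of $(X^n,Y^n)$) to make the lemma go through; as written, your inductive step cannot be completed.
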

\begin{proof}
%Any performance achievable on $\Nc^{(\jv,\kv)}$ can be achieved on $\Nc$ by adding the proper quantizers to the input and output of the channel from $a$ to $b$ in $\Nc$. Hence, \[\overline{\limsup_{\jv,\kv}\Dc(\kappa,\Nc^{(\jv,\kv)})}\subseteq\Dc(\kappa,\Nc).\] 
%%
%The nontrivial step is  to prove that  $\Dc(\kappa,\Nc)\subseteq\overline{\limsup_{\jv,\kv}\Dc(\kappa,\Nc^{(\jv,\kv)})}$. 
Let $D\in \Dc(\kappa,\Nc)$. For any $\e>0$, and for $L$ sufficiently large, there exist a joint source-channel code at rate $\kappa$ with source blocklength $L$ such that 
\begin{align}
\E[d_L(U^{(a),L},\hat{U}^{(a\to b),L})]\leq D(a,b)+\e,
\end{align}
 holds for each $(a,b)\in\Vc^2$. Let $U^L=U^{(a),L}$ and $\Uh^L=\Uh^{(a\to b),L}$ for some  fixed $(a,b)\in\Vc^2$. 
 
Conditioning the expected average distortion between $U^L$ and $\Uh^L$ on the input and output values of the AWGN channel at time $t=1$, it follows that 
\begin{align}
D(a,b)+\e&\geq \E[d_L(U^L,\Uh^L)]\nonumber\\
&=\sum_{(x_1,y_1)}p(x_1,y_1)\E[d_L(U^L,\Uh^L)|(X_1,Y_1)=(x_1,y_1)]\nonumber\\
&=\E[\d^{(1)}(X_1,Y_1)]
\end{align}
where $\d^{(1)}(x_1,y_1)\triangleq\E[d_L(U^L,\Uh^L)|(X_1,Y_1)=(x_1,y_1)]$.\\
Now assume that  the same code is applied to network $\Nc^{(j_1,k_1)}$, which is identical to $\Nc$ except that at  time $t=1$,   the AWGN channel is replaced by the structure shown in Fig.~\ref{fig:awgn_g} with parameters $j=j_1$ and $k=k_1$. The expected average distortion between $U^L$ and $\Uh^L$ in the modified network,  $D^{(j_1,k_1)}(a,b)$, can be written as
\begin{align}
D^{(j_1,k_1)}(a,b)=\E[\d^{(1)}(X_1,\Yt_1)], 
\end{align}
where $\Yt_1\triangleq[[X]_{j_1}+Z_1]_{k_1}$. Note that, conditioned on the input and output values of the AWGN channel at time $t=1$, the two networks have identical performance. \\ Further, $\Yt_1$  converges pointwise to $Y_1$ almost everywhere as $j$ and $k_1$ grow without bound, \ie
\begin{align}
\lim_{k_1\to\infty}\lim_{j_1\to\infty}\Yt_1=Y_1,
\end{align}
almost everywhere, where $Y_1=X+Z$. 

While function $\d^{(1)}(x_1,y_1)$ might not be continuous everywhere, by the Lusin's Theorem \cite{kechris1995classical}, since it is  measurable, for any $\e_1>0$, there exists a subset $\Ac_1\subset\mathds{R}^2$, such that $\P((X_1,Y_1)\in\Ac_1)<\e_1$ and $\d^{(1)}$ is continuous on $\Ac_1$.  By the law of iterated expectations,
\begin{align}
D^{(j_1,k_1)}(a,b)&=\E[\d^{(1)}(X_1,\Yt_1)|(X_1,Y_1)\in\Ac_1]\P((X_1,Y_1)\in\Ac_1)\nonumber\\
&\;\;\;\;+\E[\d^{(1)}(X_1,\Yt_1)|(X_1,Y_1)\notin\Ac_1]\P((X_1,Y_1)\notin\Ac_1)\nonumber\\
&\leq \E[\d^{(1)}(X_1,\Yt_1)|(X_1,Y_1)\in\Ac_1]+d_{\max}\e_1,\label{eq:step1}
\end{align}
and, similarly,  
\begin{align}
D^{(j_1,k_1)}(a,b)
&\geq \E[\d^{(1)}(X_1,\Yt_1)|(X_1,Y_1)\in\Ac_1]\P((X_1,Y_1)\in\Ac_1)\nonumber\\
&\geq  \E[\d^{(1)}(X_1,\Yt_1)|(X_1,Y_1)\in\Ac_1](1-\e_1).\label{eq:step2}
\end{align}

Since $\d(x_1,y_1)$ is continuous on $\Ac_1$ and is bounded,  by the bounded convergence theorem, it follows from  \eqref{eq:step1} and  \eqref{eq:step2} that
\begin{align}
\lim_{k_1\to\infty}\lim_{j_1\to\infty}D^{(j_1,k_1)}(a,b)
&\leq\lim_{k_1\to\infty}\lim_{j_1\to\infty}  \E[\d^{(1)}(X_1,\Yt_1)|(X_1,Y_1)\in\Ac_1]+d_{\max}\e_1\nonumber \\
&= \E[\lim_{k_1\to\infty}\lim_{j_1\to\infty} \d^{(1)}(X_1,\Yt_1)|(X_1,Y_1)\in\Ac_1]+d_{\max}\e_1\nonumber \\
&\leq \E[ \d^{(1)}(X_1,Y_1)|(X_1,Y_1)\in\Ac_1]+d_{\max}\e_1,\label{eq:step3}
\end{align}
and
\begin{align}
\lim_{k_1\to\infty}\lim_{j_1\to\infty}D^{(j_1,k_1)}(a,b)&\geq \E[\d^{(1)}(X_1,Y_1)|(X_1,Y_1)\in\Ac_1](1-\e_1). \label{eq:step4}
\end{align}

On the other hand, 
\begin{align}
\E[ \d^{(1)}(X_1,Y_1)] &\geq \E[ \d^{(1)}(X_1,Y_1)|(X_1,Y_1)\in\Ac_1]\P((X_1,Y_1)\in\Ac_1)\nonumber\\
&\geq \E[ \d^{(1)}(X_1,Y_1)|(X_1,Y_1)\in\Ac_1](1-\e_1),\label{eq:step5}
\end{align}
and
\begin{align}
\E[\d^{(1)}(X_1,Y_1)] &\leq \E[ \d^{(1)}(X_1,Y_1)|(X_1,Y_1)\in\Ac_1]+d_{\max}\e_1.\label{eq:step6}
\end{align}
Since $\E[\d^{(1)}(X_1,Y_1)]=\E[d_L(U^L,\Uh^L)]$, combining \eqref{eq:step3} and \eqref{eq:step4} with \eqref{eq:step5} and \eqref{eq:step6} yields
\begin{align}
\E[d_L(U^L,\Uh^L)] -d_{\max}\e_1 \leq \lim_{k_1\to\infty}\lim_{j_1\to\infty}D^{(j_1,k_1)}(a,b)\leq {\E[d_L(U^L,\Uh^L)]\over 1-\e_1}+d_{\max}\e_1.\label{eq:sandwich}
\end{align}
Since $\e_1$ can be made arbitrary small, from \eqref{eq:sandwich}, we have 
\begin{align}
 \lim_{k_1\to\infty}\lim_{j_1\to\infty}D^{(j_1,k_1)}(a,b)=\E[d_L(U^L,\Uh^L)].\label{eq:t=1}
\end{align}

The prior analysis captures the expected distortion when the continuous channel is replaced by a finite alphabet channel only at time 1. 
To finish the proof we use induction. Assume that for times $1,2,\ldots,t-1$, the continuous channel can be replaced a finite alphabet channel without, asymptotically, changing the expected average distortion, \ie
\begin{align}
&\lim_{k_{1}\to\infty}\lim_{j_{1}\to\infty}\ldots\lim_{k_{t-1}\to\infty}\lim_{j_{t-1}\to\infty}D^{(j^{t-1},k^{t-1})}(a,b)\nonumber\\
&=\E[\d^{(t-1)}(X^{t-1},Y^{t-1})],\nonumber\\
&=\E[d_L(U^L,\Uh^L)].\label{eq:ind-assumption}
\end{align}
where $D^{(j^{t-1},k^{t-1})}(a,b)$ denotes the expected average distortion between $U^L$ and $\Uh^L$ in the modified network when the parameters of the channel input and output  quantizers, at times $\tau\in\{1,\ldots,t-1\}$, are $(j^{t-1},k^{t-1})$, and let
\[
\d^{(t-1)}(x^{t-1},y^{t-1})\triangleq\E[d_L(U^L,\Uh^L)|(X^{t-1},Y^{t-1})=(x^{t-1},y^{t-1})].
\]
Now we need to show that if we add the quantizers at time $t$ as well, the performance does not change. 

In the original network
\begin{align}
\E[d_L(U^L,\Uh^L)]&=\E[\d^{(t)}(X^{t},Y^{t})]\nonumber\\
&=\E\left[\E[\d^{(t)}(X^{t},Y^{t})]|(X^{t},Y^t)]\right], 
\end{align}
and in the modified network,
\begin{align}
D^{(j^t,k^t)}(a,b)&=\E[\d^{(t)}(\Xt^t,\Yt^t)], \nonumber\\
&=\E\left[\E\left[\left.\d^{(t)}(\Xt^t,\Yt^t)\right|(\Xt^{t-1},\Yt^{t-1})\right]\right],\label{eq:D2}
\end{align}
where for $t'\in\{1,\ldots,t\}$, $\Xt_{t'}$ is the channel input at time $t'$ when the given code is applied and the Gaussian channel replaced by its quantized approximation, and  $\Yt_{t'}=[[\Xt_{t'}]_{j_{t'}}+Z_{t'}]_{k_{t'}}$. Note that $\Xt_1=X_1$. 

While $X_t$ and $\Xt_t$ might have different distributions due to the quantizations at times $t'=1,\ldots,t-1$, their conditional distributions given the inputs and outputs of the channel up to time $t-1$ are identical in both networks, \ie
\begin{align}
&\P\left(\left.\Xt_t<x_t\right|(\Xt^{t-1},\Yt^{t-1})=(x^{t-1},y^{t-1})\right)\nonumber\\
&=\P\left(\left.X_t<x_t\right|(X^{t-1},Y^{t-1})=(x^{t-1},y^{t-1})\right).\label{eq:F}
\end{align}
Let 
\begin{align}
\gamma&(x^{t-1},y^{t-1})\triangleq\E[\d^{(t)}(X^t,Y^t)]|(X^{t-1},Y^{t-1})=(x^{t-1},y^{t-1})]\nonumber\\
&\E\left[\int\d^{(t)}(x^t,y^{t-1},Y_t)dF(x_t|(x^{t-1},y^{t-1}))\right]
\end{align}
and 
\begin{align}
&\tilde{\gamma}^{(j_t,k_t)}(x^{t-1},y^{t-1})\triangleq\E[\d^{(t)}(\Xt^t,\Yt^t)]|(\Xt^{t-1},\Yt^{t-1})=(x^{t-1},y^{t-1})]\nonumber\\
&=\E\left[\int\d^{(t)}(x^t,y^{t-1},\Yt_t)dF(x_t|(x^{t-1},y^{t-1}))\right],
\end{align}
where in the last line we are using \eqref{eq:F}.

Using the same argument as the one used to prove \eqref{eq:t=1}, it follows that
\begin{align}
&\lim_{k_t\to\infty} \lim_{j_t\to\infty}\tilde{\gamma}^{(j_t,k_t)}(x^{t-1},y^{t-1})=\gamma(x^{t-1},y^{t-1}).\label{eq:kj-t}
\end{align}
Hence,
\begin{align}
&\lim_{k_1\to\infty} \lim_{j_1\to\infty}\ldots\lim_{k_t\to\infty} \lim_{j_t\to\infty}D^{(j^t,k^t)}(a,b)\nonumber\\
&=\lim_{k_1\to\infty} \lim_{j_1\to\infty}\ldots\lim_{k_t\to\infty} \lim_{j_t\to\infty}\E[\tilde{\gamma}^{(j_t,k_t)}(\Xt^{t-1},\Yt^{t-1})]\nonumber\\
&=\lim_{k_1\to\infty} \lim_{j_1\to\infty}\ldots\lim_{k_{t-1}\to\infty} \lim_{j_{t-1}\to\infty}\E\left[\lim_{k_t\to\infty} \lim_{j_t\to\infty}\tilde{\gamma}^{(j_t,k_t)}(\Xt^{t-1},\Yt^{t-1})\right]\nonumber\\
&\stackrel{(a)}{=}\lim_{k_1\to\infty} \lim_{j_1\to\infty}\ldots\lim_{k_{t-1}\to\infty} \lim_{j_{t-1}\to\infty}\E\left[{\gamma}(\Xt^{t-1},\Yt^{t-1})\right]\nonumber\\
%&=\lim_{k_1\to\infty} \lim_{j_1\to\infty}\E[\lim_{k_2\to\infty} \lim_{j_2\to\infty}\tilde{\gamma}^{(j_2,k_2)}(X_1,\Yt_1)]\nonumber\\
%&=\lim_{k_1\to\infty} \lim_{j_1\to\infty}\E[\gamma(X_1,\Yt_1)]\nonumber\\
%&=\E\left[\lim_{k_1\to\infty} \lim_{j_1\to\infty}\gamma(X_1,\Yt_1)\right]\nonumber\\
%&=\E[\gamma(X_1,Y_1)]\nonumber\\
&\stackrel{(b)}{=}\E[d_L(U^{(a),L},\hat{U}^{(a\to b),L})],
\end{align}
where $(a)$ follows from \eqref{eq:kj-t} plus the dominated convergence theorem, and (b) follows from our inductive hypothesis.

\end{proof}

\section*{Acknowledgments}
This work was supported in part by Caltech's Center for the Mathematics of Information (CMI) and DARPA ITMANET grant W911NF-07-1-0029.

\bibliographystyle{unsrt}
\bibliography{myrefs}

\end{document}